\documentclass[runningheads]{llncs}
\usepackage[utf8]{inputenc}

\usepackage{array}
\usepackage{xcolor}
\usepackage{todonotes}
\usepackage{graphicx}
\usepackage{subfigure}
\usepackage{pgf,tikz}
\usepackage{mathrsfs}
\usepackage{amsmath}
\usetikzlibrary{arrows}

\usepackage{lineno}

\let\doendproof\endproof
\renewcommand\endproof{~\hfill$\qed$\doendproof}

\newcommand{\remove}[1]{{}}
\newcolumntype{P}[1]{>{\centering\arraybackslash}p{#1}}

\title{Improved Bounds for Drawing Trees on \\ Fixed Points with L-shaped Edges }
\author{Therese Biedl\inst{1} \and 
Timothy M. Chan\inst{2} \and
Martin Derka\inst{1}  \and \\
Kshitij Jain\inst{1} \and 
Anna Lubiw\inst{1}
}
\institute{University of Waterloo, Waterloo, ON, N2L 3G1, Canada\\
\email{\{biedl,mderka,k22jain,alubiw\}@uwaterloo.ca} 
\and
University of Illinois at Urbana-Champaign,
\email{tmc@illinois.edu}
}
\authorrunning{T. Biedl, T. M. Chan, M. Derka, K. Jain, A. Lubiw}
\date{May 2017}

\begin{document}

\newif\ifarxiv 
\arxivtrue

\ifarxiv
\fi

\maketitle

\begin{abstract}
Let $T$ be an $n$-node tree of maximum degree 4, 
and let $P$ be a set of $n$ points in the plane with no two points on the same horizontal or vertical line.  
It is an open question whether $T$ always has a planar drawing on $P$ such that each edge is drawn as an orthogonal path with one bend (an ``L-shaped'' edge). 
By giving new methods for drawing trees, we improve the bounds on the size of the point set $P$ for which such drawings are possible to: $O(n^{1.55})$ for maximum degree 4 trees; $O(n^{1.22})$ for maximum degree 3 (binary) trees; and $O(n^{1.142})$ for perfect binary trees.

Drawing ordered trees with L-shaped edges is harder---we give an example that cannot be done and a bound of $O(n \log n)$ points for L-shaped drawings of ordered caterpillars, which 
contrasts with
the known linear bound for unordered caterpillars. 
\end{abstract}

\section{Introduction}

The problem of drawing a planar graph so that its vertices are
restricted to a specified set of points in the plane has been well-studied, both from the perspective of algorithms and from the perspective of bounding the size of the point set and/or the number of bends needed to draw the edges.  
Throughout this paper we consider the version of the problem where the points are unlabelled, i.e.,~we may choose to place any vertex at any point.

One basic result is that every planar $n$-vertex graph has a planar drawing on any set of $n$ points, even with the limitation of at most 2 bends per edge~\cite{Kaufmann-Wiese}. 
If every edge must be drawn as a straight-line segment then any $n$ points in general position still suffice for drawing trees~\cite{Bose-trees} and outerplanar graphs~\cite{Bose-outerplanar} but 
this result does not extend to any non-outerplanar graph~\cite{Gritzmann}, 
and the decision version of the problem becomes NP-complete~\cite{Cabello}.
Since $n$ points do not always suffice, the 
next natural question is: How large must a \emph{universal} point set be, and how many points are needed for \emph{any} point set to be universal?  
An upper bound of $O(n^2)$ follows from the 1990 algorithms that draw planar graphs on an $O(n) \times O(n)$ grid~\cite{dFPP,Sch}, but the best known lower bound, dating from 1989, is  
$c \cdot n$ for some $c>1$~\cite{universal-lower-bound}.  

Although orthogonal graph drawing has been studied for a long time, analogous questions of universal point sets for orthogonal drawings have only been explored recently, beginning with  Katz et al.~\cite{Katz-manhattan} in 2010.
Since at most 4 edges can be incident to a vertex in an orthogonal drawing, attention is restricted to graphs of maximum degree 4.
Throughout the paper we will restrict attention to 
point sets  in ``general orthogonal position'' meaning that no two points share the same $x$- or $y$-coordinate.
We study the simplest type of orthogonal drawings 
where every edge must be drawn as an orthogonal path of two segments.
Such a path is called an ``L-shaped edge'' and these drawings are called ``planar L-shaped drawings''.
Observe that any planar L-shaped drawing lives in the grid formed by the $n$ horizontal and $n$ vertical lines through the points. 

Di Giacomo et al.~\cite{DiGiacomo} introduced the problem of planar L-shaped drawings and showed that any tree of maximum degree 4 has a planar L-shaped drawing on any set of $n^2 -2n + 2$ points (in general orthogonal position, as will be assumed henceforth).  Aichholzer et al.~\cite{Aichholzer} improved the bound to $O(n^c)$ with $c = \log_2 3 \approx 1.585$.
It is an open question whether $n$ points always suffice. 
Surprisingly, nothing better is known for trees of maximum degree 3.

The largest subclass of trees for which $n$ points are known to suffice is the class of caterpillars of maximum degree 3~\cite{DiGiacomo}.  A
\emph{caterpillar} is a tree such that deleting the leaves gives a path, called the \emph{spine}.
For caterpillars of maximum degree 4 with $n$ nodes, any point set of 
size $3n - 2$ permits a planar L-shaped drawing~\cite{DiGiacomo}, and the factor was improved to $5/3$ by Scheucher~\cite{Scheucher-thesis}.

\subsection{Our Results}

We give improved bounds as shown in Table~\ref{table:results}.  A tree of max degree 3 (or 4) is \emph{perfect} if it is a rooted binary tree (or ternary tree, respectively) in which all leaves are at the same height and all non-leaf nodes have 2 (or 3, respectively) children.
Our bounds are achieved by constructing the drawings recursively and analyzing the resulting recurrence relations, 
which is the same approach used previously by Aichholzer et al.~\cite{Aichholzer}.    
Our improvements come from more elaborate drawing methods. 
Results on maximum degree 3 trees are in Section~\ref{sec:degree-3} and results on maximum degree 4 trees are in Section~\ref{sec:degree-4}.

\renewcommand{\arraystretch}{1.2}
\begin{table}[h!]
  \centering
  \caption{Previous and new bounds on the number of points sufficient for planar L-shaped drawings of any tree of $n$ nodes.  The previous bounds all come from Aichholzer et al.~\cite{Aichholzer}.}
  \label{table:results}
    \begin{tabular}{|l|c|c|}
  \hline
    &  previous  & new \\
    \hline \hline
    deg 3 perfect\ \ \ \ & $n^{1.585}$ & \ \ $n^{1.142}$\ \ \\ 
    deg 3 general\ & $n^{1.585}$ & $n^{1.22}$\\
    deg 4 perfect\ & $n^{1.465}$\footnotemark{} & \\
    deg 4 general\ & $n^{1.585}$ &  $n^{1.55}$ \\
    \hline
  \end{tabular}
\remove{
  \caption{Our bounds on the number of points sufficient for planar L-shaped drawings of any tree of $n$ nodes.  The previous best bounds are indicated in parentheses and all come from Aichholzer et al.~\cite{Aichholzer}.}
  \begin{tabular}{|l|c|c|}
  \hline
    & \ perfect\  & \ general\ \\
    \hline \hline
    max deg 3\  & $n^{1.142}$ ($n^{1.585}$) & $n^{1.22}$ ($n^{1.585}$)\\
    \hline
    max deg 4\  & ($n^{1.465}$)\footnotemark{} & $n^{1.55}$ ($n^{1.585}$)\\
    \hline
  \end{tabular}
  }
\end{table}
\renewcommand{\arraystretch}{1}

\footnotetext{The bound of $n^{1.465}$ is not explicit in~\cite{Aichholzer} but will be explained in Section~\ref{sec:degree-4}.}

\remove{
\begin{center}
\begin{tabular}{|r|c|c|}
\hline
& perfect trees & arbitrary trees \\ \hline \hline 
binary trees & $n^{1.142}$ & $n^{1.22}$ \\ \hline 
ternary trees & $n^{1.26}$ & $n^{1.55}$ \\ \hline
\end{tabular}
\end{center}
}

We also consider the case of \emph{ordered} trees where the cyclic order of edges incident to each vertex is specified.  We give an example of an $n$-node ordered tree (in fact, a caterpillar) and a set of $n$ points such that the tree has no L-shaped planar drawing on the point set. 
We also give a positive result about drawing some ordered caterpillars on $O(n \log n)$ points. The caterpillars that can be drawn on such $O(n \log n)$ points include our example that cannot be drawn on a given set of $n$ points. These results are in Section~\ref{sec:caterpillars}.

\subsection{Further Background}

Katz et al.~\cite{Katz-manhattan} introduced the problem of drawing a planar graph on a specified set of points in the plane so that each edge is an \emph{orthogeodesic path}, i.e.~a path of horizontal and vertical segments whose length is equal to the $L_1$ distance between the endpoints of the path. 
They showed that the problem of deciding whether an $n$-vertex planar graph has a planar orthogeodesic drawing on a given set of $n$ points is NP-complete.
Subsequently, Di Giacomo et al.~\cite{DiGiacomo} showed that any $n$-node tree of maximum degree 4 has an orthogeodesic drawing on any set of $n$ points where the drawing is restricted to the $2n \times 2n$ grid that consists of the ``basic'' horizontal and vertical lines through the points, plus one extra line between each two consecutive parallel basic lines.  If the drawing is restricted to the basic grid, their bounds were $4n-3$ points for degree-4 trees, and $3n/2$ points for degree-3 trees.  These bounds were improved by Scheucher~\cite{Scheucher-thesis} and then by B\'ar\'any et al.~\cite{Barany}. 


\section{Ordered Trees---the Case of Caterpillars}
\label{sec:caterpillars}

All previous work has assumed that trees are unordered, i.e.,~that we may freely choose the cyclic order of edges incident to a vertex. 
In this section we show that ordered trees on $n$ vertices do not always have planar L-shaped drawings on $n$ points.  
Our counterexample is a \emph{top-view caterpillar}, i.e., a caterpillar such that the two leaves adjacent to each vertex lie on opposite sides of the spine.
%
The main result in this section is that 
every top-view caterpillar has a planar L-shaped drawing on $c n \log n$ points for some $c>0$. 

First the counterexample.  We prove the following in 
\ifarxiv
Appendix~\ref{appendix:14-caterpillar}:
\else
the full version:
\fi

\begin{lemma}
The top-view caterpillar $C_{14}$ on $n=14$ nodes shown in 
Figure~\ref{fig:caterpillar}(a) cannot be drawn with L-shaped edges on the point set $P_{14}$ of size $14$ shown in 
Figure~\ref{fig:caterpillar}(c).
\label{claim:14-caterpillar}
\end{lemma}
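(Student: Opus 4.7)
The plan is to argue by contradiction: assume there exists a planar L-shaped drawing of $C_{14}$ on $P_{14}$ and derive a contradiction via exhaustive but structured case analysis. The key enabling observations are that, because $P_{14}$ is in general orthogonal position, each edge has at most two possible L-shaped routings (the ``horizontal-first'' and ``vertical-first'' elbows), and because $C_{14}$ is a \emph{top-view} caterpillar with a prescribed cyclic edge order at each spine vertex, once the spine is placed the two leaves at each internal spine vertex are constrained to go to points strictly above and strictly below the spine locally. Thus the search collapses to (i) choosing which 14 points host which nodes and (ii) for each spine edge, choosing one of two elbows.

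First I would identify the spine $s_1,\dots,s_k$ of $C_{14}$ and read off, for each $s_i$, how many leaves lie above and how many below; call these \emph{profiles}. For each candidate injection of the spine into $P_{14}$, the top-view condition together with planarity forces each leaf of $s_i$ to be one of the points in a specific open quadrant relative to $s_i$ and its neighbors along the drawn spine. I would prune the search by the following observations, which should be direct from the picture: the two spine endpoints must be placed at points whose open half-planes contain enough ``free'' points on the correct side to host their adjacent leaves; and the drawn spine is an $x$-monotone or $y$-monotone orthogonal staircase in many cases, because its L-shaped edges and the top-view condition forbid the spine from looping back over itself without trapping leaves that then cannot escape to the correct side.

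The main step is then a finite case analysis on where one or two ``anchor'' spine vertices of $C_{14}$ land in $P_{14}$. Using the degree-/leaf-count profile of these anchors, together with the geometric constraint that the two leaves at an anchor must occupy points in specific rectangular regions disjoint from other vertices and from the rest of the spine, I would show that no such placement survives: in every case either (a) there are not enough available points in one of the required half-planes, (b) the only L-shape from spine vertex to a leaf is blocked by another spine segment forcing a crossing, or (c) the next spine edge has no L-shape routing that avoids the already-placed leaves. Symmetry of $P_{14}$ under the rotation or reflection visible in Figure~\ref{fig:caterpillar}(c) should cut the number of real cases by a small constant factor.

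The hard part will be keeping the case analysis honest and compact: 14 points and 14 nodes give an enormous raw search space, so the proof must exploit the top-view condition and planarity aggressively to eliminate most placements without casework. I would therefore aim to isolate a single ``forced'' structural feature of any hypothetical drawing---for instance, that the leftmost spine vertex must be mapped to a specific extremal point of $P_{14}$ whose available free points on one vertical side are insufficient for the two leaves and the next spine edge simultaneously---and turn the argument into a short chain of such forced-move deductions rather than a brute enumeration. The statement is about a single fixed pair $(C_{14},P_{14})$, so once such a forcing chain is found the contradiction is immediate.
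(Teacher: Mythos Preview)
Your outline points in the right direction---contradiction plus case analysis driven by the top-view (straight-through) constraint---but as written it is only a plan, not a proof, and it is missing the two concrete structural observations that make the paper's argument short rather than an ``enormous raw search space.''

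The paper's proof exploits that $P_{14}$ decomposes into three diagonal clusters $R_1,R_2,R_3$ (lower-left, middle, upper-right), and proves two small claims that do all the work. First, within any cluster $R_i$, if two points $r,t$ are consecutive in $x$-order with $r$ above-left of $t$, they cannot both host spine (degree-4) vertices: the bottom ray of $r$ and the left ray of $t$ force the edge $(r,t)$ to be routed down-then-left, and symmetrically the right ray of $r$ and top ray of $t$ force it right-then-up, giving a double edge. This single local obstruction immediately caps the number of spine vertices per cluster. Second, using the straight-through property (the two spine edges at $s_2$ or $s_3$ must be aligned), at most one spine vertex lies in $R_1$ and it must be an endpoint $s_1$ or $s_4$; the same holds for $R_3$. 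From there, the argument is a short forced chain: $s_2,s_3\in R_2$, one endpoint (say $s_1$) is in $R_1$, the orientation of $(s_1,s_2)$ propagates orientations along the spine so that $s_4\in R_2$ as well, and then counting which leaf-rays can reach $R_3$ versus $R_1$ yields the contradiction.

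Your proposal does not surface either of these ideas. The ``consecutive-in-a-diagonal'' obstruction is the key lemma that collapses the search; without it you are left with the brute enumeration you explicitly hope to avoid. Also, your claim that the spine is ``$x$-monotone or $y$-monotone in many cases'' is not justified and is not what the paper uses; the relevant constraint is the straight-through condition at interior spine vertices, which is weaker than monotonicity. To turn your plan into a proof, identify the three-cluster structure of $P_{14}$ and prove the consecutive-point obstruction first; the rest then becomes the short forcing chain you were hoping for.
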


It is conceivable that this counter-example is an isolated one---we have been unable to extend it to a family of such examples. 


\begin{figure}

\centering
\includegraphics[width=0.7\linewidth]{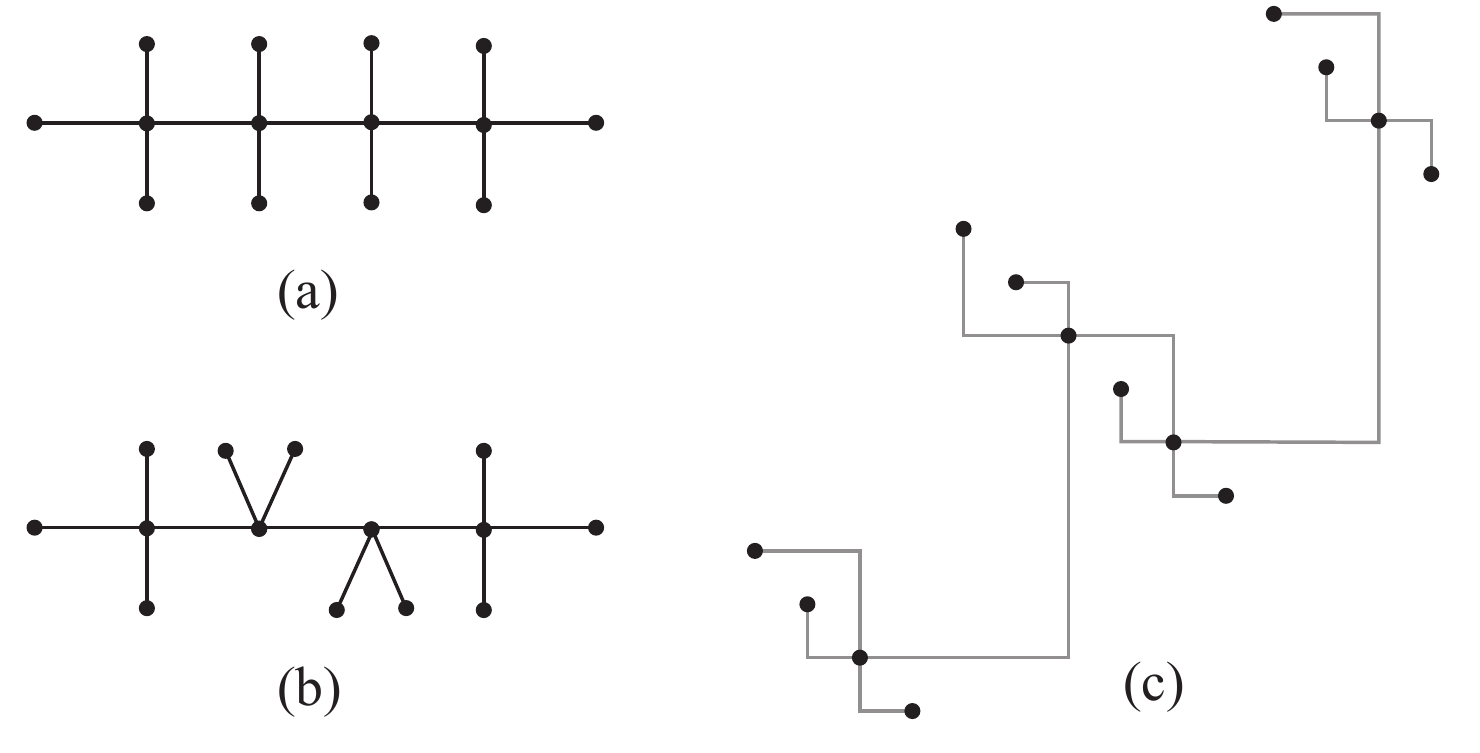}
\caption{The ordered top-view caterpillar $C_{14}$ shown in (a) does not have a planar L-shaped drawing on the point set $P_{14}$ shown in (c).  The ordering shown in (b) does.}
\label{fig:caterpillar}
\end{figure}

Next we explore the question of how many points are needed for a planar L-shaped drawing of an $n$-vertex top-view caterpillar. 
Consider the appearance of the caterpillar's spine (a path) in such a drawing.  
Each node of the spine, except for the two endpoints, must have its two incident spine edges aligned---both horizontal or both vertical. 
Define a \emph{straight-through} drawing of a path to be a planar L-shaped drawing such that the two incident edges at each vertex are aligned. 
The best bound we have for the number of points that suffice for a straight-through drawing of a path is obtained when we draw the path in a monotone fashion, i.e.~i.e. with non-decreasing x-coordinates. 

\begin{theorem} Any path of $n$ vertices has an $x$-monotone straight-through drawing on any set of at least $c \cdot n \log n$ points for some constant $c$.
\label{thm:mono-path}
\end{theorem}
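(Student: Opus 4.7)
My plan is to reduce the drawing problem to a one-dimensional subsequence problem on permutations, and then prove the subsequence bound via an iterative block-based construction using Erd\H{o}s--Szekeres.

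I will begin by characterizing straight-through $x$-monotone L-shaped drawings. With the $n$ vertices listed in $x$-order as $v_1,\ldots,v_n$, each has an orientation type ($H$ or $V$) specifying whether its incident edge-segments are horizontal or vertical; since L-shaped edges have one segment of each orientation, adjacent vertices have opposite types, so types alternate along the path. Non-adjacent edges have disjoint $x$-ranges by $x$-monotonicity, so the only possible planarity issue is overlap between two consecutive edges at their shared vertex; this does not occur at an $H$-type vertex (the two horizontal half-edges there go in opposite $x$-directions), and at a $V$-type internal vertex $v_{2k}$ it is ruled out precisely when $y(v_{2k})$ lies strictly between $y(v_{2k-1})$ and $y(v_{2k+1})$. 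The drawing problem is therefore equivalent to finding a subsequence of $n$ points of the input, listed in $x$-order, whose $y$-coordinate sequence has every internal even-indexed element strictly between its two neighbors.

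I will then construct such a subsequence iteratively in phases. Sort the $N=\Theta(n\log n)$ points by $x$, take the leftmost as $v_1$ (designated $H$-type), and partition the remaining points in $x$-order into $\Theta(n/\log n)$ consecutive blocks of size $\Theta(\log^2 n)$. Each phase uses one block to extend the path by roughly $\log n$ new vertices, while maintaining the invariant that the current last vertex is $H$-type so that when it becomes internal it imposes no new constraint. In a phase with current last vertex $v_i$ and block $B$, split $B$ into $A_+=\{p\in B:y(p)>y(v_i)\}$ and $A_-=\{p\in B:y(p)<y(v_i)\}$; the larger of the two has $\geq\log^2 n$ points and by Erd\H{o}s--Szekeres contains a monotone subsequence of length $\geq\log n$. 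In the benign case --- an increasing chain inside $A_+$ or a decreasing chain inside $A_-$ --- appending this chain to the path as the next $\log n$ vertices works, because the resulting sequence $y(v_i), y(v_{i+1}),\ldots$ is monotone, so every new triple is monotone and the ``between'' condition at each new $V$-vertex is satisfied automatically. Summing $\Theta(\log^2 n)$ points per phase over $\Theta(n/\log n)$ phases yields the total bound of $\Theta(n\log n)$.

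The main obstacle I anticipate is the adversarial case in which Erd\H{o}s--Szekeres delivers a long monotone chain only in the ``wrong'' direction (a decreasing chain inside $A_+$, or an increasing chain inside $A_-$); directly attaching such a chain to $v_i$ makes the first new $V$-vertex an extreme rather than a between-value of its triple. I expect to handle this with adaptive control: each phase has flexibility in which vertex of its monotone chain terminates the phase (subject only to preserving the $H$-type parity), so I will pick the final $v_i$ of a phase so that the partition of the upcoming block lands in the benign case; alternatively, I would enlarge the block by a constant factor and extract a favorable prefix or suffix of whatever monotone chain Erd\H{o}s--Szekeres produces. Formalizing this adaptive rule and carrying the extra flexibility cleanly through the iteration is the technical heart of the proof.
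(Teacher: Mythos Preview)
Your reduction to a subsequence problem is correct and clean: an $x$-monotone straight-through drawing exists precisely when the $y$-sequence has every internal even-indexed entry strictly between its neighbours. But your construction has a real gap at the junctions, and neither of your suggested patches closes it. Consider a point set whose $j$th block of $\log^2 n$ points has $y$-values in the interval $[(j-1)\log^2 n,\, j\log^2 n]$ arranged as a strictly decreasing staircase in $x$. Every block lies entirely above every earlier block. After phase~1 your last $H$-vertex $v_i$ has $y(v_i)$ in block~1's range, so block~2 is entirely in $A_+$, and every monotone chain in block~2 of length $>1$ is decreasing: this is exactly your adversarial case, and no choice of $v_i$ within phase~1's chain can fix it, since all candidates lie below all of block~2. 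Enlarging the block by any constant factor $K$ does not help either: the longest increasing chain across $K$ consecutive blocks has length~$K$ (one point per block), so Erd\H{o}s--Szekeres still hands you a decreasing chain inside a single block. The difficulty recurs at every phase boundary. So ``adaptive control'' as you describe it is insufficient; you would need a genuinely new idea to bridge the phases.

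For comparison, the paper's proof avoids this obstruction by working recursively rather than iteratively. It draws the first half of the path on the first $M(n/2)$ points, then looks at the points to the right of the last used point $p$ and strips off the minimal layer $T'$ above $p$ and $B'$ below $p$ in the dominance order. These layers are themselves monotone chains, so if either has $\ge n$ points the entire path embeds on it; otherwise each has fewer than $n$ points, so removing both costs at most $2n$, and the second half of the path is drawn recursively on what remains. The connection is then made through a single point $q$ that is guaranteed to exist in the rectangle between $p$ and the start $r$ of the second half precisely because $r$ is not minimal. This layer-peeling dichotomy is what handles the ``wrong-direction'' staircase cleanly and yields the recurrence $M(n)=2M(n/2)+2n$.
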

\begin{proof}
We prove that if the number of points satisfies the recurrence $M(n) = 2 M(\frac{n}{2}) + 2n$ then any path of $n$ vertices has an $x$-monotone straight-through drawing on the points.  Observe that this recurrence relation solves to $M(n) \in \Theta(n \log n)$ which will complete the proof.  
Within a constant factor we can assume without loss of generality that $n$ is a power of 2. 

Order the points by $x$-coordinate. 
Recall our assumption that no two points share the same $x$- or $y$-coordinate.
By induction, the first half of the path has an $x$-monotone straight-through drawing on the first $M(\frac{n}{2})$ points. 
We add the assumption that the path starts with a horizontal segment.

Let $p$ be the second last point used.  Since $n$ is a power of 2, the path goes through $p$ on a horizontal segment.
Let $T$ be the set of points to the right of and above $p$.  Let $B$ be the set of points to the right of and below $p$.
Refer to Figure~\ref{fig:mono-path}(a).
In $T$, consider the partial order $(x_1,y_1) \prec_T (x_2,y_2)$ if $x_1 < x_2$ and $y_1 < y_2$.  Let $T'$ be the set of minimal elements in this partial order.  
Similarly, in $B$, let $B'$ be the set of elements that are minimal in the ordering $(x_1,y_1) \prec_B (x_2,y_2)$ if $x_1 < x_2$ and $y_1 > y_2$.  If $T'$ has $n$ or more points, then we can draw the whole path on $T'$ with an $x$-monotone straight-through drawing starting with a horizontal segment.  The same holds if $B'$ has $n$ or more points.  Thus we may assume that $|T'|, |B'| < n$.  
We now remove $T'$ and $B'$; let $R = (T - T') \cup (B - B')$.  Then $|R| \ge M(\frac{n}{2})$.

\begin{figure}
\label{fig:mono-path}
\centering
\includegraphics[width=0.9\linewidth]{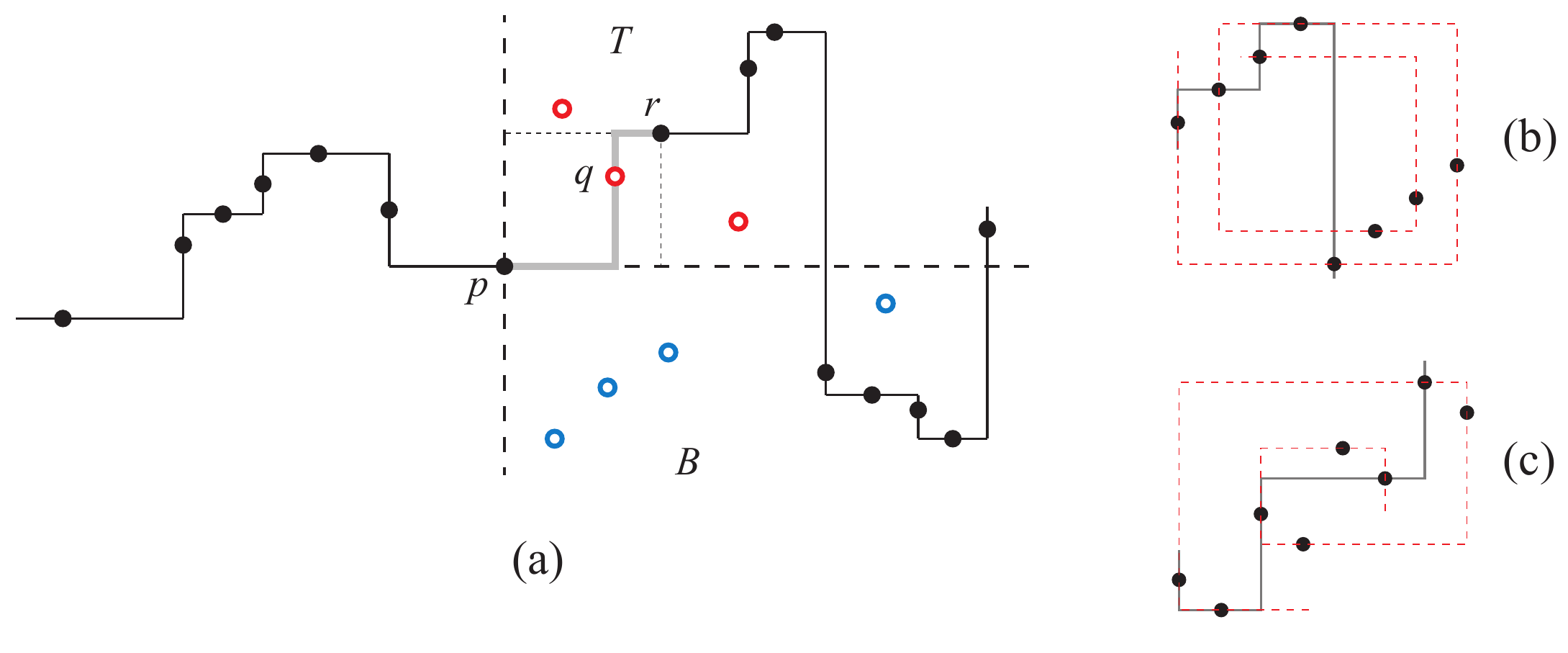}
\caption{(a) The construction for the proof of Theorem~\ref{thm:mono-path}.  The points of $T'$ and $B'$ are drawn as hollow red points above $p$ and hollow blue points below $p$, respectively.
(b-c) Examples of point sets of size $2n$ for which the maximum length of an $x$-monotone straight-through path is $n+1$.  Such paths are shown in grey. In both cases there are non-monotone planar straight-through paths of length $2n$ (dashed).
}
\end{figure}

By induction the second half of the path has an $x$-monotone straight-through drawing on the set $R$ starting with a horizontal segment.  Let $r$ be the first point used for this drawing.
Assume without loss of generality that $r$ lies in $T$.  (The other case is symmetric.)
Consider the rectangle with opposite corners at $p$ and $r$.  Since $r$ is not in $T'$, there is a point $q \in T$ inside the rectangle.  We can join the two half paths using a vertical segment through $q$ and the last vertex of the first half path is embedded at $q$.    
\end{proof}

We can extend the above result to draw the entire caterpillar (not just its spine) with the same bound on the number of points:

\begin{theorem}
Any top-view caterpillar of $n$ vertices has a planar L-shaped drawing on any set of $c \cdot n \log n$ points for some constant $c$.
\label{thm:mono-caterpillar}
\end{theorem}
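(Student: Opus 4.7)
The plan is to extend the recursive construction of Theorem~\ref{thm:mono-path} to draw the entire caterpillar---spine and leaves together---while maintaining the same recurrence $M(n) = 2M(\frac{n}{2}) + O(n)$, which still solves to $\Theta(n \log n)$. Throughout, $n$ denotes the total number of vertices of the caterpillar (spine vertices plus leaves).

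I would split the caterpillar at the midpoint of its spine into two sub-caterpillars, each containing at most $n/2$ vertices together with all their leaves. After sorting the given points by $x$-coordinate, I would recursively draw the first sub-caterpillar on the leftmost $M(n/2)$ points, inductively obtaining an $x$-monotone straight-through drawing of the left spine that ends with a horizontal segment at the midpoint vertex $p$ and that places all leaves of the first half. The staircase sets $T'$ above $p$ and $B'$ below $p$ are constructed exactly as in Theorem~\ref{thm:mono-path}, the second sub-caterpillar is recursively drawn on the remaining points, and the two drawings are joined by an L-shape bending through a point $q$ inside the rectangle determined by $p$ and the first point $r$ used by the second drawing.

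The key new ingredient is leaf placement. Since the caterpillar is top-view, each internal spine vertex has at most one leaf above and one leaf below the spine. Because the spine is straight-through and horizontal at every internal vertex, a leaf edge at a spine vertex $v$ must leave $v$ vertically and bend horizontally to its leaf point. I would strengthen the induction hypothesis so that the recursive drawing not only satisfies the spine conditions of Theorem~\ref{thm:mono-path} but also places each leaf at a point chosen from the input set near its spine vertex, using at most $O(1)$ grid points per spine vertex. The $x$-monotonicity of the spine guarantees that the L-shapes of leaves attached to different spine vertices occupy disjoint vertical strips, ruling out crossings between them; staying in the open half-plane above or below the spine rules out crossings with the spine itself.

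The main obstacle will be verifying that, at each recursive step, enough points of the arbitrary input set lie in the correct quadrants to host the leaves of the vertices drawn at that step. I expect to handle this in the same spirit as the staircase argument of Theorem~\ref{thm:mono-path}: if a required local region is too sparse to accommodate the needed leaf, we can enlarge the staircase or shift the splitting point while still drawing each sub-caterpillar on only $M(n/2)$ points. Accounting for at most two leaf points per spine vertex adds $O(n)$ to the additive term of the recurrence already budgeted for $T'$ and $B'$, so the recurrence $M(n) = 2M(n/2) + O(n)$ is preserved, yielding the claimed $O(n\log n)$ bound.
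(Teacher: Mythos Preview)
Your overall framework matches the paper's: reuse the recursion of Theorem~\ref{thm:mono-path}, carry the leaves along in the induction hypothesis, and keep the additive term linear so the recurrence stays $M(n)=2M(n/2)+O(n)$. But the proposal has a genuine gap at exactly the place you flag as ``the main obstacle,'' and the filler you suggest is not yet a mechanism.

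The crux is the joining vertex $q$. In Theorem~\ref{thm:mono-path} you only need one point in the rectangle between $p$ and $r$ to host $q$. For a top-view caterpillar, $q$ is a spine vertex with two leaves on opposite sides, and the spine passes through $q$ \emph{vertically} (the edge from $p$ arrives from above and the edge to $r$ leaves upward), so its leaves must go left and right. Your text asserts the spine is ``horizontal at every internal vertex,'' which is false precisely at $q$, and this breaks your disjoint-vertical-strips argument there. You need, inside that rectangle, not one point but a monotone chain of three points: the middle one for $q$, the outer two for its leaves. Removing only the single staircase layer $T'$ (and $B'$) guarantees one point, not three. The paper's fix is concrete and small: peel off the first, second, and third staircase layers above and below $p$. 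If any layer has $n$ points, it is a diagonal set and the whole caterpillar embeds on it directly by the result of Di~Giacomo et~al.; otherwise you have discarded at most $6n$ points, and now any surviving point $r$ dominates a chain of length three in $T'$, giving you exactly the three points needed for $q$ and its two leaves. Your phrase ``enlarge the staircase'' points in this direction but is not the argument.

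A secondary issue: your plan treats leaf placement for the recursively drawn halves as a strengthened induction hypothesis with some unspecified local rule (``near its spine vertex''). That is not needed and is hard to make precise on an arbitrary point set. Once the three-layer peeling is in place, the only vertex whose leaves are not handled by one of the two recursive calls is $q$ itself; all other spine vertices and their leaves are covered by induction, and the base case is the diagonal-set embedding cited above.
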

\begin{proof}[outline]
Follow the above construction, but in addition to $T'$ and $B'$, also take the second and third layers. If any layer has $n$ or more points, we embed the whole caterpillar on it \cite{DiGiacomo}.  Otherwise, we remove at most a linear number of points, and embed the second half of the caterpillar by induction on the remaining points.  Then, in the rectangle between $p$ and $r$ there must be an increasing sequence of 3 points.  Use the middle one for the left-over spine-vertex $q$ and the other two for the leaves of $q$.
\end{proof}

We conjecture that $2n$ points suffice for an $x$-monotone straight-through drawing of any $n$-path.  See Figure~\ref{fig:mono-path}(b-c) for a lower bound of $2n$.
Do $n$ points suffice if the $x$-monotone condition is relaxed to planarity?
Finally, we mention that the problem of finding monotone straight-through paths is related to a problem about alternating runs in a sequence, as explained 
\ifarxiv
in Appendix~\ref{app:runs}.
\else
in the full paper.
\fi

\section{Trees of Maximum Degree 3}
\label{sec:degree-3}

In this section, we prove bounds on the number of points needed for L-shaped drawings of trees with maximum
degree $3$. We treat the trees as rooted and thus, we refer
to them
as binary trees. 
We name the parts of the tree as shown in Figure~\ref{fig:tree-set-up}(a).  The root $r_0$ has two subtrees $T_1$ and $T_2$ of size $n_1$ and $n_2$, respectively, with $n_1 \le n_2$. 
$T_2$'s root, $r_1$, has subtrees of sizes $n_{2,1}$ and $n_{2,2}$ with $n_{2,1} \le n_{2,2}$.

\begin{figure}[htb]
\centering
\includegraphics[width=\linewidth]{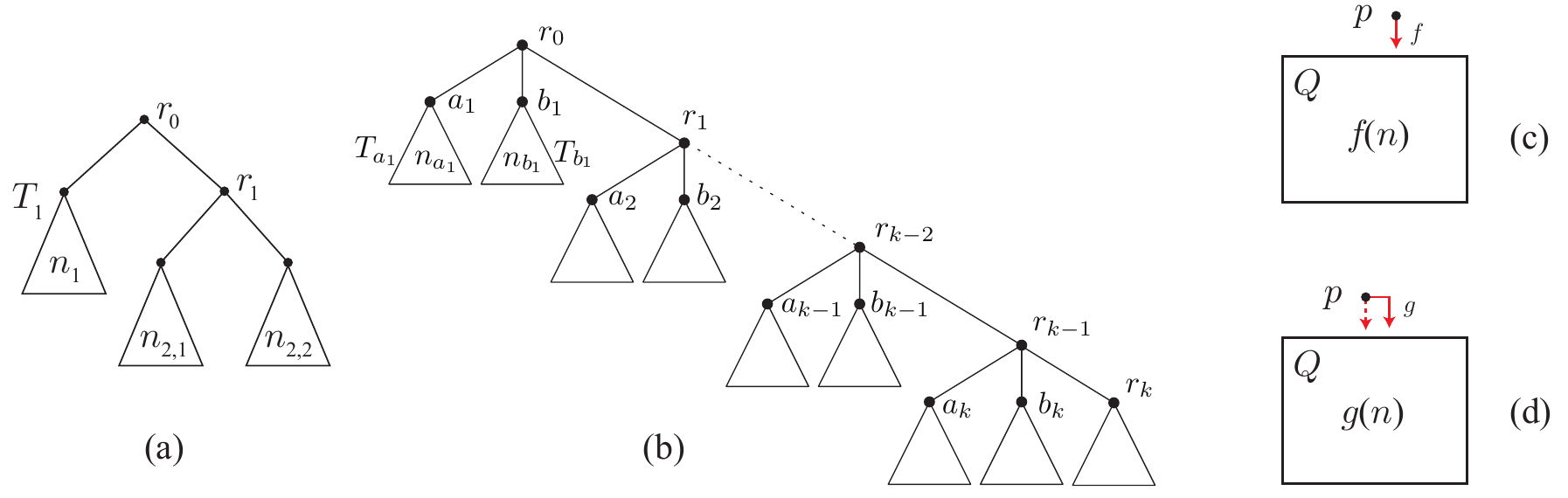}
\caption{The naming conventions for (a) binary and (b) ternary trees.  The set-up for (c) $f$-configurations and (d) $g$-configurations.}
\label{fig:tree-set-up}
\end{figure}

The main idea is to draw a tree $T$ on a set of points in a rectangle $Q$ by partitioning the rectangle into subrectangles in which we recursively draw subtrees.  This gives rise to recurrence relations  
for the number of points needed to draw trees of size $n$, which we then analyze.
We distinguish two subproblems or ``configurations.'' In each, we must draw a tree $T$ rooted at $r_0$ in a rectangle $Q$ that currently has no part of the drawing inside it.  Furthermore, the parent $p$ of $r_0$ has already been drawn, and one or two rays outgoing from $p$ have been reserved for drawing the first segment of edge $(p,r_0)$ (without hitting any previous part of the drawing). 

In the \emph{$f$-configuration} the reserved ray from $p$ goes vertically downward to $Q$. See Fig.~\ref{fig:tree-set-up}(c). 
Let $f(n)$ be the smallest number of points such that any binary tree with $n$ vertices can be
drawn in any rectangle with $f(n)-1$ points in the $f$-configuration\footnote{Beware: we will use the same notation $f(n)$ in Section~\ref{sec:degree-4} to refer to ternary trees.}. 
We will give various ways of drawing trees in the $f$-configuration, each of which gives an upper bound on $f(n)$. Among these choices, the algorithm uses the one that requires the fewest points. 

In the \emph{$g$-configuration} we reserve a horizontal ray from $p$, that allows the L-shaped edge $(p,r_0)$ to turn downward into $Q$ at any point without hitting any previous part of the drawing.  In addition, we reserve the vertical ray downward from $p$ in case this ray enters $Q$.
See Fig.~\ref{fig:tree-set-up}(d) for the case where the horizontal ray goes to the right.
Let $g(n)$ be the smallest number of points such that any binary tree with $n$ vertices can be
drawn in any rectangle with $g(n)-1$ points in the $g$-configuration.  
Observe that $f(n) \ge g(n)$ since the $g$-configuration gives us strictly more freedom. 

We start with two easy constructions to give the flavour of our methods.

\begin{figure}[htb]
\centering
\includegraphics[width=\linewidth]{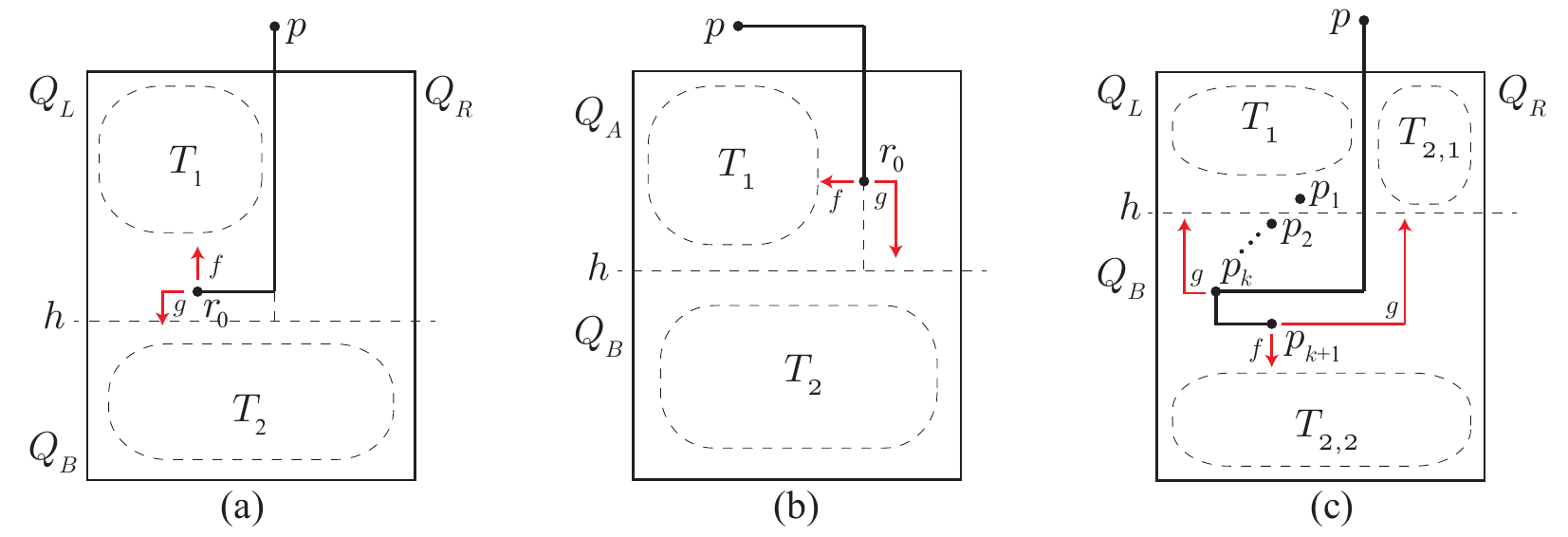}
\caption{Three methods: (a) $f$-draw-1; (b) $g$-draw; and (c) $f$-draw-2.}
\label{fig:fdraw1-gdraw}
\end{figure}

\smallskip\noindent{\bf $f$-draw-1.}  This method, 
illustrated in Figure~\ref{fig:fdraw1-gdraw}(a), applies to an $f$-configuration.  We first describe the construction and then say how many points are required.
Continue the vertical ray from $p$ downward to a horizontal half-grid line $h$ determined as follows. 
Partition $Q$ by $h$ and the ray down to $h$ into 3 parts: $Q_B$, the rectangle below $h$; $Q_L$, the upper left rectangle; and $Q_R$, the upper right rectangle. 
Choose $h$ to be the highest half-grid line such that $Q_L$ or $Q_R$ has $f(n_{1})$ points.
Without loss of generality, assume that $Q_L$ has $f(n_{1})$ points, and $Q_R$ has at most $f(n_{1})$ points.
Place $r_0$ at the bottommost point of $Q_L$. Draw the edge $(p,r_0)$ down and left.  Start a ray vertically up from $r_0$, and recursively draw $T_{1}$ in $f$-configuration (rotated $180^\circ$) in the subrectangle of $Q_L$ above $r_0$, which has $f(n_1)-1$ points.  
This leaves the leftward and downward rays free at $r_0$, so we can draw $T_{2}$ recursively in $g$-configuration in $Q_B$ so long as there are $g(n_2) - 1$ points.
The total number of points required is $2f(n_1) + g(n_2) - 1$. Recall that $f(n)$ is 1 more than the number of required points, so this proves:  
\begin{align}\tag{$f$-1}
f(n)\leq 2f(n_{1})+g(n_{2}).
\end{align}
Observe that we could have swapped $f$ and $g$ 
which proves:
\begin{align}\tag{$f$-1'}
f(n)\leq 2g(n_{1})+f(n_{2}).
\end{align}
The above method can be viewed as a special case of Aichholzer et al.'s method for ternary trees~\cite{Aichholzer} (see Section~\ref{sec:degree-4}).  We incorporate two new ideas to improve their result: first, they used only $f$-configurations, but we notice that one of the above two recursive subproblems is a $g$-configuration in the binary tree case, and can be solved by a better recursive algorithm; second, their method wasted all the points in $Q_R$, but 
we will give more involved constructions that allow us to use some of those points.


\smallskip\noindent{\bf $g$-draw.}
This method 
applies to a $g$-configuration where the ray from the parent node $p$ goes to the right. 
Partition $Q$ at the highest horizontal half-grid line such that the top rectangle $Q_A$ has $f(n_1)$ points.  
We separate into two cases depending whether the rightmost point, $q$, of $Q_A$ is to the right or left of $p$.

If $q$ is to the right of $p$, place $r_0$ at $q$, and draw the edge $(p,r_0)$ right and down. 
See Figure~\ref{fig:fdraw1-gdraw}(b).  Start a ray leftward from $r_0$ and recursively draw $T_1$ in $f$-configuration in the subrectangle of $Q_A$ to the left of $q$.  Note that there are $f(n_1) - 1$ points here, which is sufficient.  The rightward and downward rays at $r_0$ are free, so we can draw $T_2$ recursively in $g$-configuration in $Q_B$ if there are $g(n_2) - 1$ points.  The total number of points required is 
$f(n_1) + g(n_2) - 1$.

If all points of $Q_A$ lie to the left of $p$, then 
place $r_0$ at the bottommost point of $Q_A$ and observe that we now have the situation of $f$-draw-1 with $Q_R$ empty, and $f(n_1) + g(n_2) - 1$ points suffice. 
%

This proves:
\begin{align}\tag{$g$}
g(n) \le f(n_{1}) + g(n_{2}).
\end{align}

We now describe a different $f$-drawing method that is more efficient than $f$-draw-1 above, and will be the key for our bound for binary trees.  

\smallskip\noindent{\bf $f$-draw-2.}  This method applies to an $f$-configuration.  
We begin as in $f$-draw-1, though with the $f$-drawing and the $g$-drawing switched.
Partition $Q$ by a horizontal half-grid line $h$ and the ray from $p$ down to $h$ into 3 parts: $Q_B$, the rectangle below $h$; $Q_L$, the upper left rectangle; and $Q_R$, the upper right rectangle. Choose $h$ to be the highest half-grid line such that $Q_L$ or $Q_R$ has $g(n_1)$ points.
Without loss of generality, assume the former. We separate into two cases depending on the size of $Q_R$.

If $|Q_R| < g(n_{2,1})$ then we follow the $f$-draw-1 method.  Let $p_1$ be the bottommost point of $Q_L$.  Place $r_0$ at $p_1$, draw the edge $(p,r_0)$ down and left, recursively draw $T_1$ in $g$-configuration in $Q_L$ using leftward/upward rays from $r_0$, and recursively draw $T_2$ in $f$-configuration in $Q_B$ using a downward ray from $r_0$. This requires $g(n_1) + g(n_{2,1}) + f(n_2) -1$ points, where $g(n_{2,1})$ accounts for the wasted points in $Q_R$. 

If $|Q_R| \ge g(n_{2,1})$ then we make use of the points in $Q_R$ by drawing subtree $T_{2,1}$ there.
Let $p_1$ be the bottommost point of $Q_L$, and let $p_2, p_3, \ldots$ be the points of $Q_B$ below $p_1$ in decreasing $y$-order.  Let $k \ge 2$ be the smallest index such that either $k=n$ or point $p_{k+1}$ lies to the right of $p_{k}$. See Figure~\ref{fig:fdraw1-gdraw}(c). 

We have two subcases.  If $k= n$, then $p_1, \ldots, p_k$ form a monotone chain of length $n$, i.e., a diagonal point set in the terminology of Di Giacomo et al.~\cite{DiGiacomo}.  They showed that any tree of $n$ points can be embedded on a diagonal point set, so we simply
draw $T$ on these $n$ points. (Note that if this construction is used in the induction step, upward visibility is  needed for connecting $T$ to the rest of the tree, and this can be achieved.) 

Otherwise $k<n$.  
Place $r_0$ at point $p_{k}$ and $r_1$ at $p_{k+1}$. Draw the edge $(p,r_0)$ down and left, and the edge $(r_0, r_1)$ down and right.
Recursively draw $T_1$ in $g$-configuration in $Q_L$ using leftward/upward rays from $r_0$.  Draw $T_{2,2}$ in $f$-configuration in the rectangle below $r_1$ using a downward ray from $r_1$.  Draw $T_{2,1}$ in $g$-configuration in $Q_R$ using the rightward ray from $r_1$. 
Observe that if $r_1$ lies to the right of $p$ (i.e., below $Q_R$ rather than below $Q_L$) then the upward ray from $r_1$ is clear (as required for a $g$-drawing).
The number of points required is at most $2g(n_1) + n + f(n_{2,2}) - 1$.
This accounts for at most $g(n_1)$ points in $Q_R$, and at most $n$ points below $h$ and above $r_1$.

This proves:
\begin{align*}\tag{$f$-2}
f(n) \le \max \lbrace  
   g(n_1) + g(n_{2,1}) + f(n_2), \ 
   2g(n_1) + f(n_{2,2}) + n \rbrace. 
\end{align*}



\begin{theorem}
\label{thm:perfect-binary}
Any perfect binary tree with $n$ nodes has an L-shaped drawing on any point set of size $c \cdot n^{1.142}$ for some constant $c$. 
\end{theorem}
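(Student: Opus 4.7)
The plan is to specialize the recurrences ($g$) and ($f$-$2$) already established to perfect binary trees, adopt the ansatz $f(n)\le c_f n^{\alpha}$ and $g(n)\le c_g n^{\alpha}$, and show that a solution exists for $\alpha\approx 1.142$. First I would observe that in a perfect binary tree with $n$ nodes we have $n_1=n_2=(n-1)/2$ and $n_{2,1}=n_{2,2}=(n-3)/4$, and that $T_1,T_2,T_{2,1},T_{2,2}$ are themselves perfect binary trees; hence the recursion stays within the class, and (up to rounding) the relevant recurrences become
\begin{align*}
g(n) &\le f(n/2)+g(n/2),\\
f(n) &\le \max\bigl\{\,g(n/2)+g(n/4)+f(n/2),\ 2g(n/2)+f(n/4)+n\,\bigr\}.
\end{align*}

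Next I would plug the ansatz into these three inequalities. Writing $x=2^{-\alpha}$, the first yields $c_f\le (2^{\alpha}-1)c_g$, i.e.\ $c_f/c_g\le (1-x)/x$; the two branches of the $f$-recurrence, ignoring the additive $n$ (which is $o(n^\alpha)$ since we will take $\alpha>1$), give $c_f/c_g\ge (x+x^2)/(1-x)$ and $c_f/c_g\ge 2x/(1-x^2)$ respectively. Existence of positive constants $c_f,c_g$ satisfying the coupled system reduces to the single scalar inequality obtained by matching the upper and (tighter) lower bound, which after simplification is $x^3+2x\le 1$. A short numerical check shows that the root of $x^3+2x=1$ is $x\approx 0.4534$, corresponding to $\alpha=-\log_2 x\approx 1.142$, and at this value the constraint coming from the second branch of ($f$-$2$), which simplifies to $x^3-3x^2-x+1\ge 0$, is strictly non-binding.

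Finally I would put everything together inductively. Choose $\alpha=1.142$, pick any $c_f,c_g$ with $c_f/c_g$ in the feasible interval determined above, and then inflate both constants (the ``$c$'' in the theorem statement) to absorb (i) the additive $n$ term in the second branch of ($f$-$2$) since $n=o(n^{\alpha})$, (ii) the rounding errors caused by $(n-1)/2$ and $(n-3)/4$ not being integers for all $n$, and (iii) a finite set of base cases for small $n$ (including the degenerate sub-cases inside $f$-draw-$2$ where $|Q_R|<g(n_{2,1})$ or the diagonal chain is used). An L-shaped drawing of the whole tree, rather than one in $f$- or $g$-configuration, is then obtained by placing the root at the topmost point and invoking the $f$-drawing for its single child subtree, giving the bound $cn^{1.142}$.

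The main obstacle is the bookkeeping in the last paragraph: one must verify carefully that the same exponent $\alpha$ works simultaneously for $f$ and $g$ with consistent multiplicative constants, and that the two degenerate sub-cases of ($f$-$2$) (the diagonal-chain case and the case $|Q_R|<g(n_{2,1})$) really do fit within the inductive hypothesis for perfect binary trees. Everything else is a direct computation once the critical equation $2^{-3\alpha}+2\cdot 2^{-\alpha}=1$ is identified.
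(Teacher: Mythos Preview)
Your approach is correct and essentially the same as the paper's: both specialize the recursions ($g$) and ($f$-2) to the perfect case and verify a power-law ansatz $g(n)\le \beta c\,n^{\alpha}$, $f(n)\le c\,n^{\alpha}$ by induction. The paper simply fixes $\alpha=1.142$, $\beta=1/(2^{\alpha}-1)$, $c=24$ and checks the two branches numerically, whereas you derive the critical equation $x^{3}+2x=1$ for $x=2^{-\alpha}$ explicitly; this is equivalent to the cubic $y^{3}-2y^{2}-1=0$ (with $y=2^{\alpha}$) that the paper notes in a closing remark.
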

\begin{proof}
For perfect binary trees we have $n_1 = n_2 = \frac{1}{2}n$ and $n_{2,1} = n_{2,2} = \frac{1}{4}n$. 
\newcommand{\myd}{\ensuremath{19.388}}
\newcommand{\myc}{\ensuremath{23.382}}
%
We solve the simultaneous recurrence relations 
for $f$ and $g$
in 
\ifarxiv
Appendix~\ref{app:perfect-binary} by induction.
\else
in the full paper.
\fi
\end{proof}




\newcommand{\myalpha}{\ensuremath{1.220}} 
\newcommand{\myp}{\ensuremath{2.333}}  
\newcommand{\mybeta}{\ensuremath{0.75}}   
\newcommand{\myd}{\ensuremath{83.333}}
\newcommand{\myc}{\ensuremath{112}}

\begin{theorem}
\label{thm:general-binary}
Any binary tree has an L-shaped drawing on any point set of size $c \cdot n^{1.22}$ for some constant $c$. 
\end{theorem}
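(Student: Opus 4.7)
The plan is to prove by simultaneous induction on $n$ that $f(n)\le \myc\cdot n^{\myalpha}$ and $g(n)\le \myd\cdot n^{\myalpha}$, with small $n$ dispatched as base cases; the ordering $\myc>\myd$ reflects $f(n)\ge g(n)$. I would parametrize each split by $x=n_1/n$ and, when method~($f$-2) is used, also by $y=n_{2,1}/n_2$, both lying in $[0,\tfrac12]$. After substituting the inductive hypothesis and dividing through by $n^{\myalpha}$, each recurrence becomes an explicit inequality in $x$ and possibly $y$; the additive~$n$ in ($f$-2) contributes only a lower-order term once $n$ exceeds the base case. The drawing method is then chosen by regime. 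When $x$ is small (very unbalanced tree), methods ($g$), ($f$-1), and ($f$-$1'$) already respect the inductive bounds, because the cost contributed by the small subtree is dominated by that of the large one. When $x$ is close to $\tfrac12$ (well-balanced tree), method ($f$-2) dominates: decomposing $T_2$ further into $T_{2,1}$ and $T_{2,2}$ replaces an $f(n_2)$ term by cheaper $g$ terms at arguments of size $\approx n/4$, providing just enough slack to keep the exponent below $\log_2 3$. The threshold $\mybeta$ declared at the top of the proof separates these two regimes.

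The main obstacle I foresee is the balanced regime for $g$. Using ($g$) alone with $n_1=n_2=n/2$ yields $g(n)\le \frac{\myc+\myd}{\myp}\, n^{\myalpha}$, which just barely exceeds $\myd\cdot n^{\myalpha}$ at the target exponent $\myalpha$. To close this gap I would develop a $g$-drawing analogue of $f$-draw-2: partition $Q$ by a horizontal half-grid line and place $r_0$ so that the rays reserved for the $g$-configuration split into a $g$-subproblem for $T_1$ in the top rectangle, an $f$-subproblem for a piece of $T_2$ in the lower rectangle, and the drawing of $T_{2,1}$ in the side rectangle that plain ($g$) would have discarded. The resulting recurrence has the form $g(n)\le \max\{g(n_1)+g(n_{2,1})+f(n_2),\ 2g(n_1)+f(n_{2,2})+n\}$, mirroring ($f$-2) and dominating ($g$) precisely in the balanced range of $x$ where ($g$) fails.

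Once every $(x,y)$ in the parameter space is covered by at least one method whose bound respects the inductive hypothesis, the induction closes. The remaining work is a finite numerical check: for each candidate bound $\Phi(x,y)$, locate its maximum over the relevant subregion of $[0,\tfrac12]^2$ (typically attained at an endpoint or at an interior critical point where $\myc\,\myalpha\, x^{\myalpha-1}=\myd\,\myalpha\,(1-x)^{\myalpha-1}$) and verify $\Phi\le \myc$ or $\Phi\le \myd$ at the declared constants. Since the whole-tree drawing problem is just the $f$-configuration with a virtual parent placed outside the point set, the bound $f(n)\le \myc\cdot n^{\myalpha}$ yields the theorem.
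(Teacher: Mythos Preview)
Your inductive framework matches the paper's, but there are two issues, one imagined and one real.

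The ``obstacle'' you identify for $g$ is not there. If you set $d/c=\beta$ with $\beta=1/(2^\alpha-1)$ exactly (rather than the rounded display value $\myd/\myc$), then recursion $(g)$ at $x=\tfrac12$ gives $(c+d)/2^\alpha=c(1+\beta)/2^\alpha=c\beta=d$, with equality; for general $x\in[0,\tfrac12]$ the bound $c\,x^\alpha+\beta c\,(1-x)^\alpha\le\beta c$ holds by convexity after checking the endpoints $x=0$ and $x=\tfrac12$. This is exactly what the paper does. Your proposed $g$-analogue of $f$-draw-2 is also misconceived: the reason $f$-draw wastes points is that the \emph{vertical} ray from $p$ splits the top strip into $Q_L$ and $Q_R$, whereas in the $g$-configuration the incoming ray is horizontal and can turn down anywhere, so $g$-draw consumes exactly $f(n_1)$ points in its top rectangle with no side rectangle left over to recycle.

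The real gap is in the $f$-bound. You propose to cover the balanced regime $x\approx\tfrac12$ entirely with $(f\text{-}2)$, but this fails when $y=n_{2,1}/n_2$ is small. At $x=\tfrac12$, $y=0$ the second branch of $(f\text{-}2)$ is $2g(n/2)+f(n/2)+n$, and after applying the inductive hypothesis the leading coefficient is $(2\beta+1)/2^\alpha\approx 2.504/2.33\approx 1.07>1$, so the induction does not close there. The paper handles this with a third case you do not mention: when $n_{2,1}\le 0.082\,n$ it unrolls one more level by combining $(f\text{-}1')$ with $(f\text{-}1)$ applied to $T_2$, obtaining $f(n)\le 2g(n_1)+2f(n_{2,1})+g(n_{2,2})$; at $x=\tfrac12$, $y=0$ this gives leading coefficient $3\beta/2^\alpha\approx 0.97<1$. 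Without this extra case (or something equivalent) your regime split leaves a hole at balanced $x$ with small $y$, and no choice of constants will close it at exponent~$\myalpha$.
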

\begin{proof}
For $n_1\le 0.349n$, we use recursion ($f$-1).
For $n_{2,1}\le 0.082n$, we combine recursion
($f$-1') and ($f$-1) to obtain
$f(n) \leq 2g(n_{1}) + 2f(n_{2,1}) + g(n_{2,2})$.
For $n_{1} > 0.349n$ and $n_{2,1} > 0.082n$, we use 
recursion ($f$-2).
We solve the simultaneous recurrence relations for $f$ and $g$ in
\ifarxiv
Appendix~\ref{app:general-binary} by induction.
\else
the full paper.
\fi
\end{proof}

\section{Trees of Maximum Degree 4}
\label{sec:degree-4}

In this section, we prove bounds on the number of points needed for L-shaped drawings of trees with maximum
degree $4$. We treat the trees as rooted and refer  to them as ternary trees. 
Given a ternary tree of $n$ nodes, let
$a_1$, $b_1$ and $r_1$ be the three children of the root $r_0$.
We use $T_v$ to denote the subtree rooted at a
node $v$, and $n_v$ to denote the number of nodes in $T_v$.
 We name the children of the root such that
$n_{a_1} \leq n_{b_1}\leq n_{r_1}$.   For $i\geq 2$, let $a_i,b_i,r_i$ be the three
children of $r_{i-1}$, named such that $n_{a_i}\leq n_{b_i}\leq n_{r_i}$.
See Figure~\ref{fig:tree-set-up}(b).


We will draw ternary trees using only the $f$-configuration as defined in Section~\ref{sec:degree-3} (see Figure~\ref{fig:tree-set-up}(c)).  
In this section (as opposed to the previous one) we define $f(n)$ to be minimum number such that any ternary tree of $n$ nodes can be drawn in $f$-configuration on any set of $f(n) - 1$ points. 

\remove{
We use the same $f$-configuration for drawings that was used in Section~\ref{sec:degree-3}.
We use $f(n)$ for the minimum number such that any set of $f(n)-1$ points
in the $f$-type setup can be used to draw any ternary tree with $n$ nodes.


Fix a tree with $n$ nodes that we want to draw.
Let $a_1$, $b_1$ and $r_1$ be the three children of root $r_0$.
In what follows, we use $T_v$ to denote the subtree rooted at a
node $v$, and $n_v=|T_v|$.
 We name the children of the root such that
$n_{a_1} \leq n_{b_1}\leq n_{r_0}$.   For $i\geq 2$, let $a_i,b_i,r_i$ be the three
children of $n_{i-1}$, named such that $n_{a_i}\leq n_{b_i}\leq n_{r_i}$.
Let $k\geq 2$ be the smallest number such that
$n_{r_k} \leq 0.9 n_{k-1}$.   
See Figure~\ref{fig:tree-set-up}(b).
}


As in Section~\ref{sec:degree-3}, we will give various drawing methods, each of which gives a recurrence relation for $f(n)$.  
\ifarxiv
Then in Appendix~\ref{app:ternary} we will analyze the recurrence relations.
\fi
We begin with a re-description of Aichholzer et al.'s method~\cite{Aichholzer}.

\smallskip\noindent{\bf $f_4$-draw-1.} Extend the vertical ray from $p$
downward to a horizontal half-grid line $h$ determined as follows. 
Partition $Q$ by $h$ and the ray down to $h$ into 3 parts: $Q_B$, the rectangle below $h$; $Q_L$, the upper left rectangle; and $Q_R$, the upper right rectangle. 
Choose $h$ to be the highest half-grid line such that $Q_L$ or $Q_R$ has  $2f(n_{a_1}) + 2 f(n_{b_1})$ points.
Without loss of generality, assume the former. 
Partition $Q_L$ vertically into two rectangles $Q_{LL}$ and $Q_{LR}$ with 
atleast $f(n_{a_1})$ points on the left and
atleast $f(n_{b_1})$ points on the right respectively, with $Q_{LL}$ to the left of $Q_{LR}$.  Place $r_0$ at the bottommost point in $Q_{LR}$.
Extend a ray upward from $r_0$ and recursively draw $T_{b_1}$ on the remaining $f(n_{b_1})-1$ points in $Q_{LR}$.
Extend a ray to the left from $r_0$ and recursively draw $T_{a_1}$ on the $f(n_{a_1})$ points in $Q_{LL}$.
Finally, extend a ray downward from $r_0$ and recursively draw $T_{r_1}$ in $Q_B$.  See Figure~\ref{fig:f4draw}(a).
The number of points required is $2f(n_{a_1}) + 2 f(n_{b_1}) + f(n_{r_1}) - 1$,
so this proves:  
\begin{align}\tag{$f_4$-1}
f(n)\leq 2f(n_{a_1})+2f(n_{b_1}) + f(n_{r_1}).
\end{align}

For example, in the case when $T$ is perfect (with $n_{a_1}=n_{b_1}=n_{r_1}=\tfrac{n}{3}$),
the inequality ($f_4$-1)  
becomes $f(n)\leq 5f(n/3)$, which resolves to $O(n^{\log_3 5})$ and $\log_3 5 \approx 1.465$.  The critical case for this recursion, though, turns out to be when
$n_{a_1}=0$ and $n_{b_1}=n_{r_1}=\tfrac{n}{2}$, which gives
$f(n)\le 3f(n/2)$ and leads to Aichholzer et al.'s $O(n^{\log_3 2})$ result.

\begin{figure}[htb]
\centering
\includegraphics[width=\linewidth]{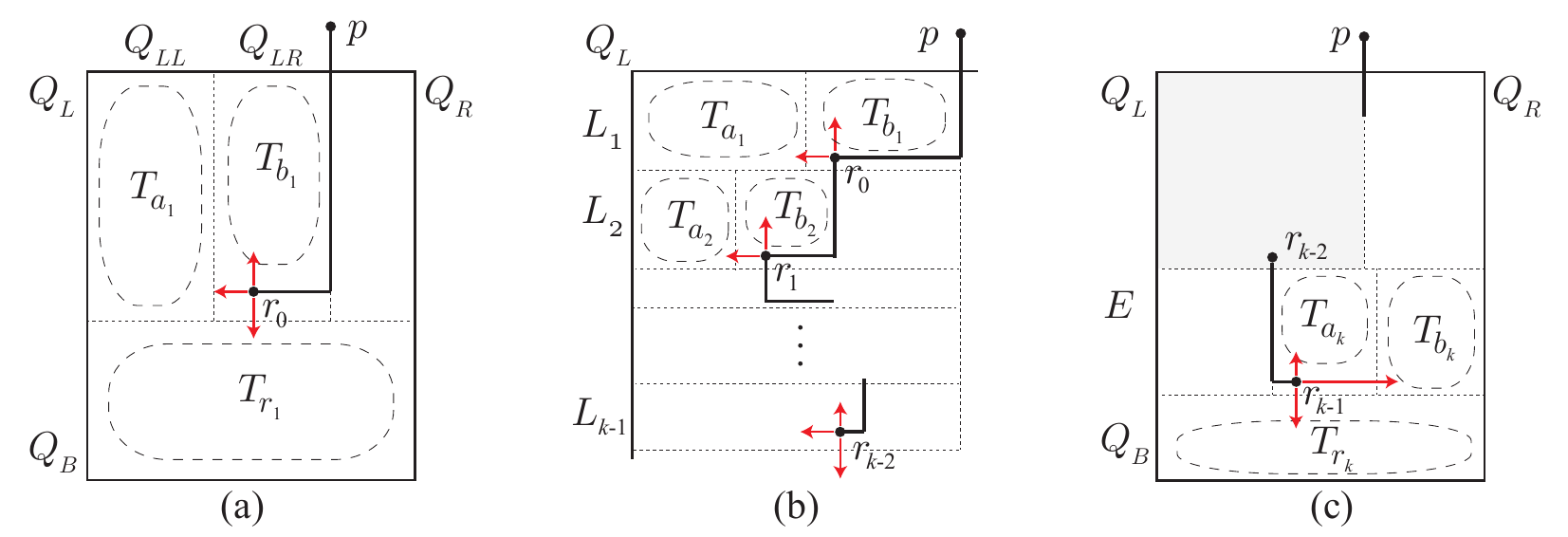}
\caption{(a) $f_4$-draw-1. (b) Drawing the ``small'' subtrees in $Q_L$.  (c) $f_4$-draw-2A.}
\label{fig:f4draw}
\end{figure}

\smallskip\noindent{\bf $f_4$-draw-2.} 
To improve their result, our idea again is to avoid wasting the points in $Q_R$, and use some of those points in the recursive drawings of subtrees at the next levels.
However, simply considering subtrees at the second level is not sufficient for an asymptotic improvement if $T_{a_2}$ and $T_{b_2}$ are too small. Thus, we consider a more complicated approach that 
stops at the first level $k \ge 2$ where 
$n_{r_k} \leq 0.9 n_{r_{k-1}}$.
Note that for $i=2,\ldots,k-1$, we have $n_{r_i} > 0.9 n_{r_{i-1}}$ and $n_{a_i},n_{b_i}\le 0.1 n_{r_{i-1}}$, and so
$T_{a_i}$ and $T_{b_i}$ are ``small'' subtrees.
We apply the same idea as above to draw not just $T_{a_1}, T_{b_1}$ but also all the small subtrees $T_{a_i}$ and $T_{b_i}$, $i = 2, \ldots, k-1 $ in $Q_L$ (appropriately defined), and then consider a few cases for how to draw the remaining ``big'' subtrees $T_{a_k}, T_{b_k}$, and $T_{r_k}$, possibly using some points in $Q_R$.
The number of points we will need to reserve for drawing $T_{a_1},T_{b_1},\ldots, T_{a_{k-1}},T_{b_{k-1}}$ is 
\begin{align*}
    Y = f(n_{a_1}) + f(n_{b_1}) + \sum_{i=2}^{k-1} ( 2f(n_{a_i}) + 2f(n_{b_i})). 
\end{align*}
Extend the vertical ray from $p$ downward until $Q_L$ or $Q_R$ has $Y$ points.
Without loss of generality, assume the former.  

\smallskip\noindent{\bf Drawing the small subtrees.}
We draw nodes $r_i$ and subtrees $T_{a_i}$ and $T_{b_i}$, $i = 1, \ldots, k-1 $ in $Q_L$ as follows.
Split $Q_L$ horizontally into rectangles $L_1, \ldots, L_{k-1}$.
The plan is to draw $r_i,T_{a_i}$ and $T_{b_i}$ in $L_i$, in the same way that $T_{a_1}$ and $T_{b_1}$ were drawn in $f_4$-draw-1.
See Figure~\ref{fig:f4draw}(b).
Level $L_1$ is special because the vertical ray from $p$ is at the right boundary of $L_1$.  Thus, we require $f(n_{a_1}) + f(n_{b_1})$ points.
For levels $L_i$, $i = 2,  \ldots, k-1$ the vertical ray from $r_{i-2}$ may enter $L_i$ at any point, so we require 
$2f(n_{a_i}) + 2f(n_{b_i})$ points to follow the plan of $f_4$-draw-1, and the L-shaped edge from $r_{i-2}$ to $r_{i-1}$ may turn left or right.
The total number of points we need in all levels is $Y$, which is why we defined $Y$ as we did.

\smallskip\noindent{\bf Drawing the final three subtrees.}
It remains to draw $r_{k-1}$ and its three subtrees $T_{a_k}$, $T_{b_k}$, and $T_{r_k}$.
We will draw $T_{r_k}$ on the bottommost $f(n_{r_k}) - 1$ points of $Q$.  Call this rectangle $Q_B$.
Let $E$ be the ``equatorial zone'' that lies between $Q_L,Q_R$ above and $Q_B$ below.  See Figure~\ref{fig:f4draw}(c).
If we are lucky, then not too many points are wasted in $Q_R$. Let $Z \le Y$ be the number of points in $Q_R$.

\begin{description} 
    \item[Case A:] 
    $Z < f(n_{b_k})$.  In this case we draw $r_{k-1}, T_{a_k}$ and $T_{b_k}$ in $E$ as in $f_4$-draw-1.  See Figure~\ref{fig:f4draw}(c).
    For this, we need $2f(n_{a_k}) + 2 f(n_{b_k})$ points in $E$.
    The total number of points required in this case is $Y + Z + 2f(n_{a_k}) + 2 f(n_{b_k}) + f(n_{r_k}) - 1$,
so this proves:  
\begin{align*}\tag{$f_4$-2A}
f(n) & \leq Y {+} Z {+} 2f(n_{a_k}) {+} 2 f(n_{b_k}) {+} f(n_{r_k})\\
    & = f(n_{a_1}) {+} f(n_{b_1}) {+} \sum_{i=2}^{k-1} \left( 2f(n_{a_i}) {+} 2f(n_{b_i})\right) {+} 2f(n_{a_k}) {+} 3 f(n_{b_k}) {+} f(n_{r_k}).
\end{align*}
\end{description}

We must now deal with the unlucky case when $Z \ge f(n_{b_k})$.  
We will require $3f(n_{a_k}) + f(n_{b_k})$ points in $E$. We sum up the total number of points below, but first we describe how to complete the drawing in $E$.
Partition $E$ into three regions: $E_L$, $E_M$, and $E_R$, where $E_L$ is the region to the left of $r_{k-2}$, $E_R$ is the region to the right of $p$, and $E_M$ is the region between them. See Figure~\ref{fig:f4drawB}.
Observe that either $| E_L| \ge f(n_{a_k}) + f(n_{b_k})$, or $|E_M| \ge f(n_{a_k})$, or $|E_R| > f(n_{a_k})$.
We show how to draw $r_{k-1}, T_{a_k}$ and $T_{b_k}$ in each of these 3 cases.

\begin{figure}[htb]
\centering
\includegraphics[width=\linewidth]{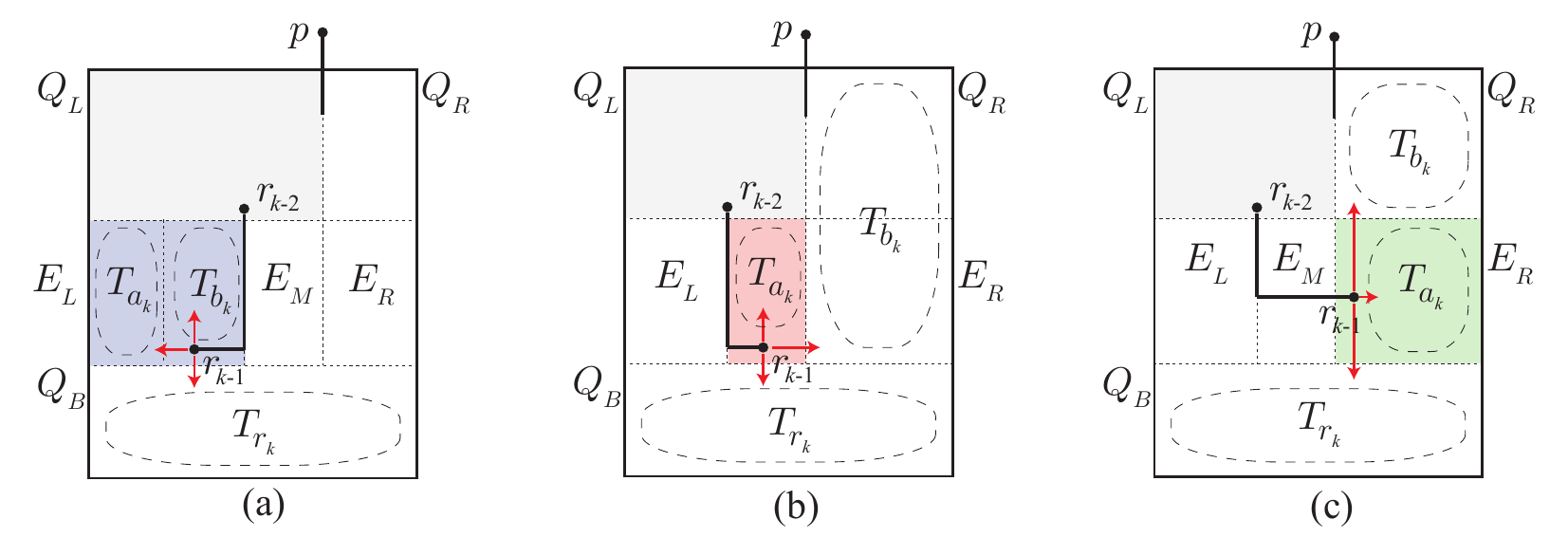}
\caption{The drawings for $f_4$-draw-2B: (a) Case B1, with $E_L$ in blue;  (b) Case B2, with $E_M$ in red;  (c) Case B3, with $E_R$ in green. }
\label{fig:f4drawB}
\end{figure}

\begin{description}
    \item[Case B1:] $|E_L| \geq f({n_{a_k}})+f(n_{b_k})$.  In this case we draw $r_{k-1}, T_{a_k}$ and $T_{b_k}$ in $E_L$ as in $f_4$-draw-1.  See Figure~\ref{fig:f4drawB}(a). Since $E_L$ is to the left of the ray down from $r_{k-2}$, we have sufficiently many points.

    \item[Case B2:] $|E_M| \ge f(n_{a_k})$.  In this case we place $r_{k-1}$ at the lowest point of $E_M$, draw $T_{a_k}$ above it in $E_M$, and $T_{b_k}$ to its right in $Q_R\cup E_R$. See Figure~\ref{fig:f4drawB}(b). Since $|Q_R|=Z \ge f(n_{b_k})$, we have enough points to do this.

    \item[Case B3:] $|E_R| > f(n_{a_k})$.   In this case we place $r_{k-1}$ at the leftmost point of $E_R$, draw $T_{a_k}$ to its right in $E_R$ and $T_{b_k}$ above it in $Q_R$.   See Figure~\ref{fig:f4drawB}(c). Again, there are sufficiently many points.
\end{description}
The total number of points required in each of these three cases is $Y + Z + 3f(n_{a_k}) + f(n_{b_k}) + f(n_{r_k}) - 1$, and $Y \le Z$ which yields:
\begin{align*}\tag{$f_4$-2B}
f(n) & \leq Y + Z + 3f(n_{a_k}) + f(n_{b_k}) + f(n_{r_k})\\
    & =  2f(n_{a_1}) + 2f(n_{b_1}) + \sum_{i=2}^{k-1} ( 4f(n_{a_i}) + 4f(n_{b_i})) + 3f(n_{a_k}) + f(n_{b_k}) + f(n_{r_k}).
\end{align*}

The bound on $f(n)$ obtained from $f_4$-draw-2 is the maximum of ($f_4$-2A) and ($f_4$-2B).



\renewcommand{\myc}{\ensuremath{2}}

\begin{theorem}
\label{thm:ternary}
	Any ternary tree with $n$ nodes has an L-shaped drawing on any point set of size $2n^{1.55}$. 
\end{theorem}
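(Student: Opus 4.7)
The plan is a strong induction on $n$ proving $f(n)\le 2n^{1.55}$, with a finite set of base cases handled by the fact that the constant $2$ gives enough slack for small $n$. In the inductive step, the algorithm chooses whichever of $f_4$-draw-1 or $f_4$-draw-2 produces the smaller upper bound; normalizing each recurrence by setting $\alpha_i=n_{a_i}/n$, $\beta_i=n_{b_i}/n$, $\gamma_i=n_{r_i}/n$ converts each inequality into the dimensionless form ``weighted sum of $x^{1.55}$-terms $\le 1$'', which is what must be verified.

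Recurrence ($f_4$-1) already suffices over most of the parameter space. Since $x\mapsto x^{1.55}$ is convex, the maximum of $2\alpha_1^{1.55}+2\beta_1^{1.55}+\gamma_1^{1.55}$ on the polytope $\{\alpha_1\le\beta_1\le\gamma_1,\ \alpha_1+\beta_1+\gamma_1\le 1\}$ is attained at a vertex, and a quick check of the vertices shows that the bound exceeds $1$ only in a neighbourhood of the Aichholzer bottleneck $(\alpha_1,\beta_1,\gamma_1)\approx(0,\tfrac12,\tfrac12)$. Outside that neighbourhood I simply apply ($f_4$-1). Inside it I switch to $f_4$-draw-2 and must bound \emph{both} ($f_4$-2A) and ($f_4$-2B), since the case split is forced by the input point set.

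To control the chain contributions $\sum_{i=2}^{k-1}(c_i\alpha_i^{1.55}+c_i\beta_i^{1.55})$ (with $c_i=2$ in case A and $c_i=4$ in case B), I combine the convexity bound $\alpha_i^{1.55}+\beta_i^{1.55}\le(\alpha_i+\beta_i)^{1.55}$ with the identity $\alpha_i+\beta_i=\gamma_{i-1}-\gamma_i$ and the constraint $\gamma_{i-1}-\gamma_i\le 0.1\,\gamma_{i-1}$. This forces geometric decay of the $\gamma_i$ and hence a convergent geometric bound on the chain sum of the form $C\cdot\gamma_1^{1.55}$ for an explicit constant $C$ determined by the exponent. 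The stopping condition $\gamma_k\le 0.9\,\gamma_{k-1}$ then supplies the extra slack that pays for the level-$k$ terms $f(n_{a_k})$, $f(n_{b_k})$, $f(n_{r_k})$, exactly as in the analogous arguments for Theorems~\ref{thm:perfect-binary}--\ref{thm:general-binary}.

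The main obstacle is the boundary verification: on the compact region where ($f_4$-1) just fails, one must confirm that both ($f_4$-2A) and ($f_4$-2B), in dimensionless form, are $\le 1$ for every admissible configuration $(\alpha_1,\beta_1,\dots,\alpha_k,\beta_k,\gamma_k)$ consistent with the chain stopping at level $k$. The exponent $1.55$ is essentially the smallest one for which the resulting system admits a common inductive solution, and verifying it over all chain lengths $k\ge 2$---with the geometric chain bound handling large $k$ and a direct multivariate case analysis handling the small cases (especially $k=2$, where the bottleneck-region inequalities are tightest)---is where the technical work lies.
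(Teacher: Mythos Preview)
Your overall architecture matches the paper's: strong induction on $f(n)\le cn^{1.55}$, a case split between using ($f_4$-1) away from the bottleneck $(0,\tfrac12,\tfrac12)$ and using ($f_4$-2A)/($f_4$-2B) near it, convexity of $x\mapsto x^{1.55}$ so that only extreme points of the relevant polytopes need checking. The paper makes the cutoff explicit ($n_{b_1}\le 0.47n$) and, in the second case, observes that ($f_4$-2B) dominates ($f_4$-2A) because $2(n_{b_k})^\alpha\le (n_{b_1})^\alpha$ there, so only one expression needs to be bounded.

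Where your proposal diverges, and goes wrong, is the chain-sum step. The constraint you cite, $\gamma_{i-1}-\gamma_i\le 0.1\,\gamma_{i-1}$, says $\gamma_i\ge 0.9\,\gamma_{i-1}$: it is a \emph{lower} bound on $\gamma_i$, not an upper bound, so it does not ``force geometric decay of the $\gamma_i$'' at all---the $\gamma_i$ can remain essentially constant for arbitrarily many levels. Consequently the chain sum is not a priori a geometric series. If instead you try the crude decoupled bound $\sum_i 4(\alpha_i+\beta_i)^{1.55}\le 4\cdot 0.1^{0.55}\gamma_1^{1.55}\approx 1.13\,\gamma_1^{1.55}$ (via $(\alpha_i+\beta_i)\le 0.1\gamma_1$ and $\sum_i(\alpha_i+\beta_i)\le\gamma_1$), then at $\beta_1\approx\gamma_1\approx\tfrac12$ you already get $2\beta_1^{1.55}+1.13\,\gamma_1^{1.55}\approx 1.07>1$ before even adding the level-$k$ terms, so the induction does not close. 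The difficulty is that the chain sum and $\gamma_{k-1}$ are coupled (large chain $\Rightarrow$ small $\gamma_{k-1}\Rightarrow$ small level-$k$ terms), and a bound that ignores this coupling is too weak.

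The paper handles exactly this coupling by a level-by-level potential argument: it shows (Claim~2) that for each $2\le j\le k-1$,
\[
4(n_{a_j})^\alpha+4(n_{b_j})^\alpha+0.92(n_{r_j})^\alpha\ <\ 0.92(n_{r_{j-1}})^\alpha,
\]
so the entire chain telescopes into the single term $0.92(n_{r_1})^\alpha$. A separate extreme-point check (Claim~1) bounds the level-$k$ contribution by $0.92(n_{r_{k-1}})^\alpha$, and a final check (Claim~3) verifies $2(n_{a_1})^\alpha+2(n_{b_1})^\alpha+0.92(n_{r_1})^\alpha\le n^\alpha$. This telescoping is the missing idea in your sketch; replacing your ``geometric decay'' paragraph with it gives a complete proof along the lines you intended.
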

\begin{proof}
For $n_{b_1} \leq 0.47n$, we use recursion ($f_4$-1).
Otherwise, we use ($f_4$-2A) or ($f_4$-2B) and take
the larger of the two bounds.
We solve the recurrence relation for $f$ in 
\ifarxiv
Appendix~\ref{app:ternary} by induction.
\else
the full paper.
\fi
\remove{
We will show by induction on $n$ that $f(n) \leq cn^\alpha$ for $\alpha=1.55$ and $c=\myc$. 
The bound holds for $n=1$ since one point is enough.  Now assume that the bound holds for all values  $<n$.
We split the induction step into two cases based on the size of $T_{b_1}$.
The algorithm uses the best-possible recursion, which means that it suffices
to show that the bound holds for one of the recursive formulas for $f$.
For details, see 
\ifarxiv
Appendix~\ref{app:ternary}.
\else
the full paper.
\fi
}
\remove{  
\begin{itemize}
\item Case 1: \textbf{$n_{b_1} \leq 0.47n$}.
By ($f_4$-1) and the induction hypothesis, we know
$
    f(n) \leq 2c (n_{a_1})^\alpha + 2c(n_{b_1})^\alpha + c(n_{r_1})^\alpha. 
$
Since
the trivariate function $F(n_{a_1},n_{b_1},n_{r_1})=2 (n_{a_1})^\alpha + 2(n_{b_1})^\alpha + (n_{r_1})^\alpha$ is convex, it suffices to check whether
the bound holds for the extreme points of the convex region
$\{(n_{a_1},n_{b_1},n_{r_1})\in [0,n]^3 : n_{a_1}+n_{b_1}+n_{r_1} \le n,\ n_{a_1}\le n_{b_1}\le n_{r_1},\ n_{b_1}\le 0.47n\}$.  In this case, the extreme points (excluding the origin) are
$(0, 0, n)$, $(\tfrac{n}{3},\tfrac{n}{3},\tfrac{n}{3})$,  and $(0, 0.47n, 0.53n)$.  
Since
\begin{align*}
2(0) + 2(0) + (n^\alpha) & \leq n^\alpha \\
2(\frac{n}{3})^\alpha + 2(\frac{n}{3})^\alpha + (\frac{n}{3})^\alpha & < 0.911n^\alpha \\	
2(0) + 2(0.47n)^\alpha + (0.53n)^\alpha & < 0.995n^\alpha, 
\end{align*}
we have $f(n)\leq cn^\alpha$ in this case.

\item Case 2: \textbf{$n_{b_1} > 0.47n$}.  See 
\ifarxiv
Appendix~\ref{app:ternary}.
\else
the full paper.
\fi
\end{itemize}
}
\end{proof}

\section{Conclusions}
We have made slight improvements on the exponent $t$ in the bounds that $c \cdot n^t$ points always suffice for drawing trees of maximum degree 4, or 3, with L-shaped edges.     
Improving the bounds to, e.g.,~$O(n \log n)$ will require a breakthrough. In the other direction, there is still no counterexample to the possibility that $n$ points suffice.

We introduced the problem of drawing ordered trees with L-shaped edges, where many questions remain open.  For example: Do $c \cdot n$ points suffice for drawing ordered caterpillars?  Can our isolated example be expanded to prove that $n$ points are not sufficient in general?

\smallskip\noindent{\bf Acknowledgments}
We thank Jeffrey Shallit for investigating the alternating sequences discussed in Section~\ref{sec:caterpillars}.
This work was done as part of a Problem Session in the Algorithms and Complexity group at the  University of Waterloo.  We thank the other participants for helpful discussions.

\newpage
\bibliographystyle{splncs03}
\bibliography{ortho-trees}

\ifarxiv
\appendix\clearpage
\section{Proof of Lemma~\ref{claim:14-caterpillar}}
\label{appendix:14-caterpillar}

\begin{figure}[t]
\hspace*{\fill}
\includegraphics[width=0.40\textwidth]{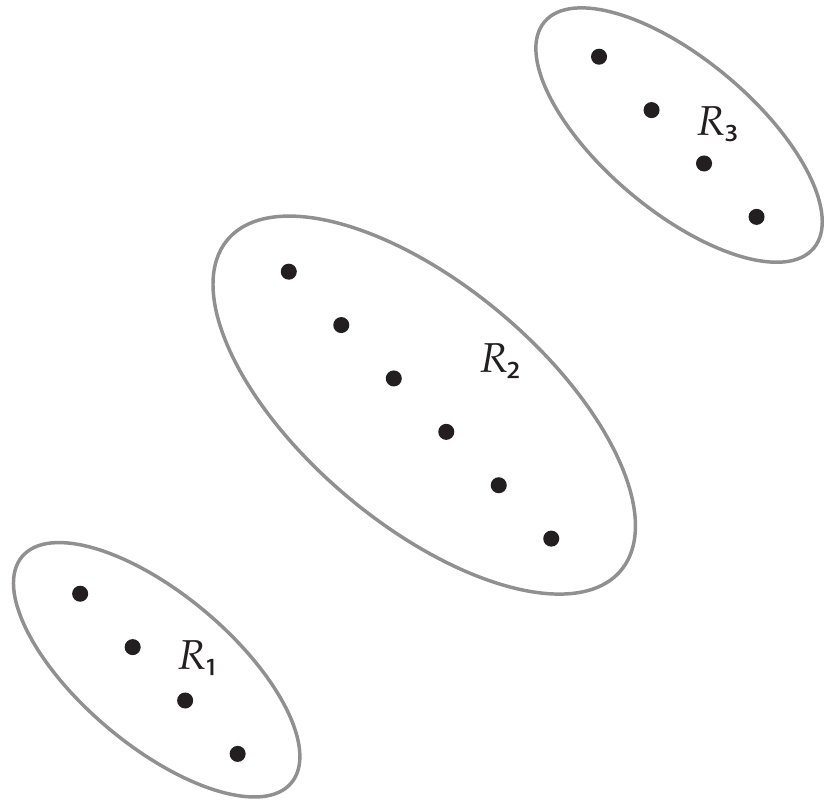}
\hspace*{\fill}
\includegraphics[width=0.40\textwidth]{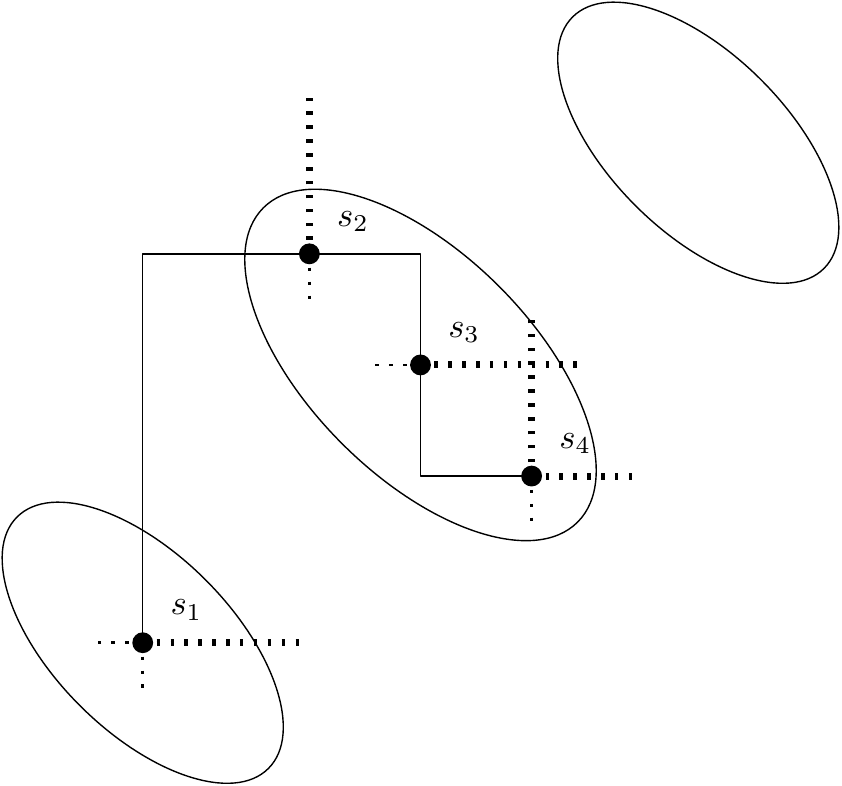}
\hspace*{\fill}
\caption{Point set $P_{14}$ of size $14$.}
\label{fig:catpointset}
\end{figure}

Let us denote the three groups of points by $R_1$, $R_2$ and $R_3$ as shown in Fig.~\ref{fig:catpointset}. Let $s_1,\dots,s_4$ be the {\em spine-vertices} (the vertices of degree 4 of $C_{14}$), in order along the spine. We prove Lemma~\ref{claim:14-caterpillar} using the following two claims:

\begin{claim}
\label{claim:not-consecutive}
    Let $r$ and $t$ be two points in $R_i$ for some $i \in \{1,2,3\}$ that are both assigned spine-vertices of $C_{14}$ such that $r$ is to the left of and above $t$.  Then $r$ and $t$ are not consecutive in the $x$-order of points of $R_i$.
\end{claim}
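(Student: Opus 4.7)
My plan is to argue by contradiction: assume that $r$ and $t$ are consecutive in the $x$-order of $R_i$, with $r$ above and to the left of $t$, and that they host spine-vertices $s_j$ and $s_k$, respectively (say $j<k$, without loss of generality). The geometric content of the assumption is that the open vertical strip $\{(x,y):x(r)<x<x(t)\}$ contains no point of $R_i$, so in particular no point of $R_i$ lies strictly between $r$ and $t$ in $x$; any leaf or intermediate spine-vertex that needs to be placed ``between'' them must therefore be hosted in one of the other two regions or to the left of $r$ / right of $t$ inside $R_i$.

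The first step would be to pin down the geometry at $s_j$ and $s_k$. Because every spine-vertex of $C_{14}$ has degree $4$, its two spine-edges must be aligned at that vertex (otherwise one of the four incident edges could not be L-shaped within the grid). So at each of $s_j$ and $s_k$ the spine is either horizontal or vertical, and the two remaining perpendicular rays are reserved for the two leaves, which, in a top-view caterpillar, lie on opposite sides of the spine. I would then split into four cases depending on the (horizontal/vertical) spine direction at $s_j$ and at $s_k$. In each case the ``inward'' leaf of $s_j$ (the one pointing toward $t$) and the ``inward'' leaf of $s_k$ (the one pointing toward $r$) must be attached via L-shaped edges whose first segments go into the strip between $r$ and $t$, and whose leaf-endpoints must lie at existing points of $P_{14}$.

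Next I would track the spine path from $s_j$ to $s_k$. If $k=j+1$ there is a single L-shaped edge $(r,t)$, so the bend of this edge occupies either $(x(t),y(r))$ or $(x(r),y(t))$, and in either case it severely restricts which side of the spine the inward leaves can live on. If $k>j+1$, the intermediate spine-vertices $s_{j+1},\ldots,s_{k-1}$, together with their leaves, must be placed outside the strip (since no $R_i$-points are in it) and the two spine-edges leaving $r$ and $t$ respectively must enter the strip in aligned directions, by the straight-through property at the intermediate vertices. Combining this with planarity, the inward leaves of $s_j$ and $s_k$ are trapped in small wedges adjacent to $r$ and $t$; I would then use the specific geometry of the region $R_i$ in Figure~\ref{fig:catpointset} to show that these wedges contain no admissible leaf-points.

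The main obstacle will be the exhaustive case analysis: for each of the three regions $R_i$ and each of the four spine-orientation combinations at $(s_j,s_k)$, I must verify that no valid placement of the inward leaves (and of any intermediate spine-vertices) exists. I expect the hardest subcase to be when the spine detours out of $R_i$ and re-enters, because then non-local planarity arguments---rather than the strip emptiness alone---are required to rule out the configuration. Fortunately, the point set $P_{14}$ is very small, so I anticipate that each such detour can be ruled out by a direct combinatorial check against the structure of the three regions $R_1,R_2,R_3$.
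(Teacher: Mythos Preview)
Your plan is far more elaborate than what is needed, and it misses the short local argument the paper uses. The paper's proof does not care which spine-vertices sit at $r$ and $t$, whether they are adjacent on the spine, what the spine does in between, or which region $R_i$ we are in. It uses only two facts: (i) each $R_i$ is a diagonal point set inside $P_{14}$, so two points of $R_i$ that are consecutive in $x$-order are also consecutive in $y$-order, and in fact no point of $P_{14}$ at all lies strictly between them in either coordinate; (ii) a spine-vertex has degree~$4$, hence all four axis-parallel rays out of it are used by incident L-edges. From (i) and (ii), the downward ray from $r$ and the leftward ray from $t$ would cross unless they together form the edge $(r,t)$; the same reasoning applied to the rightward ray from $r$ and the upward ray from $t$ forces a second copy of the edge $(r,t)$. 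That is a double edge in a tree, contradiction.

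Your outline never isolates observation (i); you only note that the vertical strip between $r$ and $t$ is empty of $R_i$-points, not that the horizontal strip is empty too and indeed empty of all of $P_{14}$. Without that, you are pushed into the four-way case split on spine orientations, the analysis of inward leaves, and the detour subcases---none of which are needed. Your straight-through observation at degree-$4$ vertices is correct, but it is doing no real work here: the contradiction comes purely from degree~$4$ plus adjacency in both coordinate orders, not from the top-view constraint. If you do want to pursue your route, be aware that the ``spine detours out of $R_i$'' subcase is exactly where a sketch like this tends not to close without the global emptiness fact you have not yet stated; I would recommend replacing the whole plan with the two-ray crossing argument above.
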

\begin{proof}
The bottom ray of $r$ must be used since spine-vertices have degree 4.  If $r$ and $t$ are consecutive, then no point lies between them either in $x$-direction or in $y$-direction, which means that the bottom ray of $r$ either connects to $t$ or goes beyond the $y$-coordinate of $t$.  Likewise the left ray of $t$ either connects to $r$ or goes beyond the $x$-coordinate of $r$, but the latter is impossible since then the bottom ray of $r$ would intersect the left ray of $t$.  So $(r,t)$ exists and is routed along the bottom of $r$ and the left of $t$.  Repeating the argument with the right ray of $r$ and the top ray of $t$ gives that $(r,t)$ is a double edge, a contradiction.
\end{proof}

\remove{
\begin{claim}
	The $R_2$ region of point set $P_{14}$ has at most two vertices of degree 4 of $C_{14}$ as shown in Figure~\ref{fig:caterpillar}(a).
\end{claim}
\begin{proof} We prove the claim by contradiction.
	Let there be a drawing $\Gamma$ in which three points $p$, $q$, and $r$ in $R_2$ (listed in increasing $x$ order) represent vertices of degree $4$. By Claim~\ref{claim:not-consecutive}, there must be another point in $R_2$ between every two of them. Thus, a vertical line $L$ through the middle point $q$ must have exactly two points of $R_2$ on one side of it. Without loss of generality, let us say that it is on the right side. Let $s$ be the fourth point assigned to a degree 4 vertex of $C_{14}$. Based on the position of $s$ we analyze three cases:
	
	
	

	\begin{itemize}
		\item Case 1: Vertices represented by $p$, $q$ and $r$ are consecutive in the spine of $C_{14}$ and $s$ is in $R_3$. In this case, we need at least five points on the right side of the $L$ (at least 1 for the leaf of $q$, 1 for the leaf of $r$ and 3 for the leaves of $s$) but we have only $4$ free points available right of $L$.
		


		\item Case 2: This case is symmetric to Case 1 by rotating the point set by $90^\circ$ counter-clockwise and flipping along the $x$-axis. After the rotation and flip, the number of points right of $L$ in $R_2$ remains the same, and the point $s$ is in the $R_3$ group.


		\item Case 3: Vertices represented by $p$, $q$ and $r$ are not consecutive in the spine of $C_{14}$.  In this case, $s$ must be adjacent to two of $p$, $q$ and $r$. Assume that $s$ is in $R_3$ (the case when $s$ is in $R_1$ is analogous). As $s$ is strictly above and right of all $p$, $q$ and $r$, it can be connected to them only using its left and bottom ray. Thus, the spine would need to make a bend at $s$, and the two leaves connected to $s$ would need to be placed
		on the same side of the spine, which violates the order of edges at $s$.
	\end{itemize}
\end{proof}
}
	
\begin{claim}
\label{claim:notr1}
    There is at most one
    spine-vertex in $R_1$, and it is either $s_1$ or $s_4$.
\end{claim}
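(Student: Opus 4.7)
My plan is a two-step argument by contradiction. For the first part---at most one spine-vertex in $R_1$---suppose $u,v \in R_1$ are both spine-vertices with $u$ to the left of $v$. If $u$ lies above $v$, Claim~\ref{claim:not-consecutive} immediately forbids them from being consecutive in the $x$-order of $R_1$, so an intervening point of $R_1$ must exist; if $u$ lies below $v$, the analogous statement (obtained by reflecting the argument in Claim~\ref{claim:not-consecutive} about the horizontal axis) again forces an intervening point of $R_1$. I would then combine this with a counting argument: each of $u,v$ has degree $4$, so its four compass rays must host its leaves and its spine neighbors, and these endpoints are pinned to specific quadrants relative to $u$ and $v$. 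Reading the sizes and layout of $R_1,R_2,R_3$ off Figure~\ref{fig:catpointset}, the points available in the relevant quadrants fall short of what all of these leaves plus the intervening point demand, producing the contradiction.

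For the second part---if $R_1$ contains exactly one spine-vertex then it is $s_1$ or $s_4$---suppose instead $s_2 \in R_1$ (the $s_3$ case is symmetric). Because $C_{14}$ is a top-view caterpillar, the cyclic order at $s_2$ places its two spine edges in opposite compass directions, so the spine passes straight through $s_2$ either horizontally or vertically. Consequently $s_1$ and $s_3$ lie in the same row or column as $s_2$, on opposite sides of it. Since $s_1,s_3 \notin R_1$, they must sit in $R_2 \cup R_3$, and the straight-through constraint tightly restricts where. I would do a short case split on the axis of the spine at $s_2$ (horizontal vs.\ vertical) and, for each, on which side $s_1$ and $s_3$ land: each sub-case is ruled out either because the row/column through $s_2$ fails to meet the required portion of $R_2 \cup R_3$, or because the two leaves of $s_2$, which must emerge on the remaining two rays by the top-view ordering, have no legal points of $R_1$ left to attach to (using the one-spine-vertex bound from the first step to limit $R_1$'s remaining capacity).

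The main obstacle I expect is organizing the case analysis cleanly in the first step, since two spine-vertices in $R_1$ could be any of the pairs from $\{s_1,s_2,s_3,s_4\}$; whether they are consecutive on the spine changes whether the edge between them is a spine edge (with implications for which compass rays are occupied) or not, and the top-view ordering constrains which pair of opposite rays carries the spine versus the leaves. My approach would be to first fix the relative $y$-positions of $u$ and $v$, then enumerate which compass directions are forced for spine edges at each, and finally reject each configuration by checking it against the fixed geometry of $P_{14}$.
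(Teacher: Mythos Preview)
Your Part~1 plan overlooks a one-line observation that replaces the whole counting argument: the leftmost point and the bottommost point of $R_1$ are also the leftmost and bottommost points of all of $P_{14}$, so neither can host a degree-$4$ vertex (its left, respectively bottom, ray meets no point at all). That leaves only the two middle points of $R_1$ as candidates for spine-vertices, and those two are consecutive in the $x$-order of $R_1$, so Claim~\ref{claim:not-consecutive} finishes immediately. Your proposed quadrant-counting might eventually rediscover this fact (once you ask where the left ray of the leftmost candidate can possibly go), but as written it is vague, and Claim~\ref{claim:not-consecutive} alone does not exclude, e.g., spine-vertices at positions $1$ and $3$ in the $x$-order of $R_1$; you would still need the missing observation to close those cases. (Incidentally, since $R_1$ is a decreasing staircase, your sub-case ``$u$ left of and below $v$'' never arises.)

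Your Part~2 contains a genuine misconception: you assert that ``$s_1$ and $s_3$ lie in the same row or column as $s_2$'', but edges are L-shaped, not straight segments. If both spine edges leave $s_2$ vertically, each edge still bends once, so $s_1$ and $s_3$ lie strictly above and strictly below $s_2$ (in some order), not on the vertical line through $s_2$. Once this is corrected the argument collapses to one sentence: one of $s_1,s_3$ is strictly below $s_2$, and since $R_1$ occupies the lowest $y$-coordinates of $P_{14}$, that neighbour is forced into $R_1$, contradicting Part~1; the horizontal case is symmetric using the leftmost $x$-coordinates. Your planned check of whether ``the row/column through $s_2$ meets $R_2\cup R_3$'' is testing the wrong geometric condition; replacing row/column by half-plane makes the remaining case analysis on leaf placement unnecessary.
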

\begin{proof}
No vertex of degree 4 can be on the leftmost or bottommost point of $R_1$, since the left/bottom ray from it could not be used.
The remaining two points cannot be both assigned to spine-vertices by Claim~\ref{claim:not-consecutive}, so at most one spine-vertex belongs to $R_1$.

Now assume for contradiction that $s_2\in R_1$.  By the order-constraints the edges $(s_1,s_2)$ and $(s_2,s_3)$ 
are both incident to $s_2$ vertically or both horizontally.
Say they are both vertical, 
which means that one of $s_1,s_3$ is lower down than $s_2$,
and therefore also in $R_1$.  This contradicts that only one spine-vertex belongs to $R_1$.  Similarly we obtain a contradiction if both $(s_1,s_2)$ and $(s_2,s_3)$ are horizontal, so $s_2\not\in R_1$ and similarly $s_3\not\in R_1$.
\end{proof}

Similarly neither $s_2$ nor $s_3$ are in $R_3$.  We know that
at most three spine-vertices are in $R_2$ by Claim~\ref{claim:not-consecutive}, so at least one spine-vertex is in $R_1\cup R_3$, and it must be $s_1$ or $s_4$.  Say $s_1\in R_1$ (all other cases are symmetric). 

We may assume that edge $(s_1,s_2)$ leaves $s_1$ vertically, the other case is the same after a diagonal flip of the point set.  Since $s_2\in R_2$, edge $(s_1,s_2)$ arrives at $s_2$ horizontally and from the left.  So $(s_2,s_3)$ leaves $s_3$ horizontally to the right, and hence must go downward to reach $s_3$, because $s_3\in R_2$.  So $(s_2,s_3)$ reaches $s_3$ from the top, which means that $(s_3,s_4)$ leaves $s_3$ from the bottom and $s_4\not\in R_3$.  But also $s_4\not\in R_1$ since $s_1\in R_1$ and as argued above only one spine-vertex belongs to $R_1$.  So $s_4\in R_2$ as well.

This means that the four points in $R_3$ are all used for leaves.
There are only five leaves for which the corresponding edges could possibly reach $R_3$:  the top ray from $s_2$, the right ray from $s_3$, the top and right ray from $s_4$, and the right ray from $s_1$.  If both the right ray from $s_3$ and the top ray from $s_4$ go towards points in $R_3$ then they will intersect, contradicting planarity. Thus, not both of these can be using leaves in $R_3$, which means that the right ray from $s_1$ must go to $R_3$.   But then the right ray from $s_1$ blocks any of the left/bottom rays from $s_2,s_3,s_4$ from reaching $R_1$.  In consequence, only the rays from $R_1$ can reach leaves in $R_1$, leaving at least one point in $R_1$ unused.  Contradiction, so $C_{14}$ has no embedding on $P_{14}$.

\remove{

Analogously to Claim~\ref{claim:notr1}, one can argue that there is at most 1 vertex of degree 4 in $R_3$. As cannot $R_2$ contain more than $2$ vertices of degree $4$, we know that it contains precisely $2$ such vertices and $R_1$ and $R_3$ contain one $1$ each.

The both the degree 4 vertex represented by a point $r_1$ in $R_1$ and the degree 4 vertex represented by a point $r_4$ in $R_3$ must be connected to three leaves as it is impossible to maintain a straight-through drawing in $r_1$ and $r_4$. At the same time, it is not possible to maintain straight-through drawing at both the points that represent the degree $4$ vertices in $R_2$ while leading the spine from $r_1$ to $r_4$. 
}

\section{A Connection between Straight-through Drawings of Paths and  Alternating Runs in Sequences}
\label{app:runs}

The problem of finding monotone straight-through paths is related to the following problem about alternating runs in a sequence.
Given a sequence $s_1$, $s_2, \ldots , s_n$, whose elements are a permutation of $1, \ldots, n$ 
find a maximum size set of indices $i_1 \le i_2 \cdots \le  i_k$
such that the subsequence  $s_{i_1}, s_{i_2}, \ldots, s_{i_k}$
is \emph{3-good}, meaning that it
consists of alternating \emph{runs} of length at least 3.
A \emph{run} is an increasing sequence or a decreasing sequence.
For example, the subsequence   $1, 3, 6, 4, 2, 10, 12, 13$ is 3-good since its three runs, 
$1, 3, 6$ and $6, 4, 2$ and $2, 10, 12, 13$ have lengths 3, 3, and 4 respectively.
The subsequence  $1, 4, 3$, $2, 10, 12, 13$
is not 3-good because its first run, $1, 4$ is too short.

The monotone straight-through drawing problem differs from the alternating runs problem in that
the straight-through drawing problem requires all runs to be of odd length, but, on the other hand, tolerates shorter runs at the beginning and the end.
For the alternating runs problem, it seems that a sequence of length $2n$ always has a 3-good subsequence of length $n$,
\emph{except} for the sequence $5, 2, 6, 3, 1, 7, 4$---this sequence has length 7, but its longest 3-good subsequence has length 3.  


\section{Analysis for Perfect Binary Trees}
\label{app:perfect-binary}

\begin{proof}[of Theorem~\ref{thm:perfect-binary}]
For perfect binary trees we have $n_1 = n_2 = \frac{1}{2}n$ and $n_{2,1} = n_{2,2} = \frac{1}{4}n$. 
We use $\bar f, \bar g$ for the functions in this special case.
%
We claim that $\bar g(n)\leq \beta cn^\alpha$ and $\bar f(n)\leq c n^\alpha$, for $\alpha = 1.142$, $\beta=1/(2^\alpha-1)\approx 0.8286$, and $c=24$. 
Clearly we have $\bar g(1)\leq \beta c$ and $\bar{f}(1)\leq c$ since one point is enough to draw the tree.
Now assume that the bounds hold for all values $<n$.  We apply $(g)$ and have
$$\bar g(n)\leq \bar f(\tfrac{1}{2}n)+ \bar g(\tfrac{1}{2}n) \leq c (\tfrac{1}{2}n)^\alpha + \beta c\cdot (\tfrac{1}{2}n)^\alpha  
=\tfrac{1+\beta}{2^\alpha} cn^\alpha =\beta c n^\alpha$$
as desired (since $\beta$ is chosen so that $1+\beta=2^\alpha\beta$).  For function $f$, the algorithm uses the best of the recursions, which means that
it is no worse than recursion ($f$-2), and we have
\begin{eqnarray*}
\bar f(n) & \leq & \max\{ 2 \bar g(\tfrac{1}{2}n) + \bar f(\tfrac{1}{4}n) + n,\, 
    \bar g(\tfrac{1}{2}n)+ \bar g(\tfrac{1}{4}n) + \bar f(\tfrac{1}{4}n) \} \\
    & \leq & \max\{ 2\beta c(\tfrac{1}{2}n)^\alpha + c(\tfrac{1}{4}n)^\alpha  
    + n^\alpha , \, \beta c (\tfrac{1}{2}n)^\alpha  + \beta c(\tfrac{1}{4}n)^\alpha  + c(\tfrac{1}{4}n)^\alpha \}  
	\leq cn^\alpha
\end{eqnarray*}
since (with our choice of $\alpha,\beta,c$) we have
\begin{align*}
 2\beta c(\tfrac{1}{2})^\alpha + c(\tfrac{1}{4})^\alpha  + 1 < 0.957c + 1  & < c
\quad\text{and}\\
\beta c (\tfrac{1}{2})^\alpha + \beta c(\tfrac{1}{4})^\alpha + c(\tfrac{1}{4})^\alpha 
\leq 0.999c & < c. 
\end{align*}
\end{proof}

(A more careful analysis shows that the exponent in Theorem~\ref{thm:perfect-binary} approaches $\log_2 x$ where $x$ is the real root of the cubic polynomial $x^3-2x^2-1=0$.)

\section{Analysis for General Binary Trees}
\label{app:general-binary}

\begin{proof}[of Theorem~\ref{thm:general-binary}]
We claim that $ g(n)\leq \beta  cn^\alpha$ and $ f(n)\leq c n^\alpha$, for $\alpha = \myalpha$,
$\beta=1/(2^\alpha-1)\approx 0.7522$, and $c=112$.
Clearly we have $ g(1)\leq \beta c$ and ${f}(1)\leq c$ since one point is enough to draw the tree.
Now assume that the bound holds for all values $<n$ and consider the recursive formulas.  
As in the previous proof we have $g(n)\leq \beta c n^\alpha$ since $\beta=1/(2^\alpha-1)$.

As for $f$, the algorithm always uses the best-possible recursion, so it suffices (for
various cases of how nodes are distributed in the subtrees) to argue that for one of
the recursions we have $f(n)\leq cn^\alpha$. 

\begin{itemize}
\itemsep -2pt
\item{Case 1: $n_{1} \leq 0.349n$.} We use recursion ($f$-1), i.e., $f(n)\leq 2g(n_1)+f(n_2)$.
Applying induction, hence $f(n)\leq 2\beta cn_1^\alpha +cn_2^\alpha$.
For this and the other cases,
since the bivariate function $F(n_1,n_2) = 2\beta cn_1^\alpha + cn_2^\alpha$ is convex
over the domain $\{(n_1,n_2)\in [n]^2 : n_1+n_2\le n,\ n_1\le 0.349n\}$,
it suffices to 
check that the bound 
holds at the extreme points of the domain (see \cite[Lemma 10]{Scheucher-thesis}).
The extreme points for $(n_{1},n_{2})$ in this case (ignoring the origin) are $(0,n)$ and $(0.349n, 0.651n)$.   
For all of them we have $f(n)\leq cn^\alpha$ since
\begin{align*}
2c(0)^\alpha + \beta c(n)^\alpha = \beta c n^\alpha & < 0.753cn^\alpha 
\quad \text{and}\\
2c(0.349n) ^\alpha + \beta c(0.651n)^\alpha  &
< 0.9993 cn^\alpha.
\end{align*}

\item Case 2: $n_{2,1} \leq 0.082n$. 
We have $f(n) \leq 2g(n_{1}) + f(n_2)$ by ($f$-1') and $f(n_2) \leq 2f(n_{2,1}) + g(n_{2,2})$ by ($f$-1),
so $f(n) \leq 2g(n_{1}) + 2f(n_{2,1}) + g(n_{2,2})
\leq 2\beta c n_{1}^\alpha + 2cn_{2,1}^\alpha + \beta cn_{2,2}^\alpha$. The extreme points for
$(n_{1},n_{2,1},n_{2,2})$ are $(0,0,n)$, $(0,0.082n,0.918n)$,
$(\frac{1}{2}n,0,\frac{1}{2}n)$, and $(\frac{1}{2}n,0.082n,0.418n)$. 
For all of them we have $f(n)\leq cn^\alpha$ since
\begin{eqnarray*}
2\beta c(0)^\alpha + 2c(0)^\alpha + \beta c(n)^\alpha & < & 0.753 c n^\alpha 
\\
2\beta c(0)^\alpha + 2c(0.082 n)^\alpha + \beta c(0.918n)^\alpha & < & 0.773 cn^\alpha 
\\ 
2\beta c(\tfrac{1}{2}n)^\alpha + 2c(0)^\alpha + \beta c(\tfrac{1}{2}n)^\alpha & < & 0.969 cn^\alpha \\
2\beta c(\tfrac{1}{2}n)^\alpha + 2c(0.082n)^\alpha + \beta c(0.418n)^\alpha & < & 0.99991 cn^\alpha.
\end{eqnarray*}

\item Case 3: $n_{1} > 0.349n$ and $n_{2,1} > 0.082n$. 
Using recursion ($f$-2), we know that 
\begin{align*}
f(n) & \le \max \lbrace  g(n_1) + g(n_{2,1}) + f(n_2),\,  2g(n_1) + f(n_{2,2}) + n \rbrace \\
& \leq \max \lbrace \beta cn_1^\alpha + \beta cn_{2,1}^\alpha + cn_2,\,  2\beta cn_1^\alpha + cn_{2,2}^\alpha + n^\alpha \rbrace.
\end{align*}
The extreme points for 
$(n_{1},n_{2,1},n_{2})$ are $(\frac{1}{2}n,\frac{1}{4}n,\frac{1}{2}n)$ and $(0.349n,0.3255n,\allowbreak 0.651n)$ 
and the extreme points for
$(n_{1},n_{2,2})$ are $(\frac{1}{2}n,0.418n)$ and $(0.349n,\allowbreak 0.569n)$. 
For all of them we have $f(n)\leq cn^\alpha$ since
\begin{eqnarray*}
\beta c (\tfrac{1}{2}n)^\alpha + \beta c(\tfrac{1}{4}n)^\alpha  + c(\tfrac{1}{2}n)^\alpha& < & 0.891 cn^\alpha\\
\beta c (0.349n)^\alpha + \beta c(0.322n)^\alpha + c(0.651n)^\alpha & < & 0.992 cn^\alpha\\
2\beta c (\tfrac{1}{2} n)^\alpha + c(0.418n)^\alpha +n^\alpha & < & 
(0.991c+1)n^\alpha < cn^\alpha\\
2\beta c (0.349n)^\alpha + c(0.569n)^\alpha +n^\alpha & < & (0.920c+1)n^\alpha < cn^\alpha
\end{eqnarray*}
where the inequalities hold since we chose $c$ such that $0.991c+1 < c$. 
\end{itemize}
\end{proof}

\section{Analysis for Ternary Trees}
\label{app:ternary}

\begin{proof}[of Theorem~\ref{thm:ternary}]
We will show by induction on $n$ that $f(n) \leq cn^\alpha$ for $\alpha=1.55$ and $c=\myc$. 
The bound holds for $n=1$ since one point is enough.  Now assume that the bound holds for all values  $<n$.
We split the induction step into two cases based on the size of $T_{b_1}$.
The algorithm uses the best-possible recursion, which means that it suffices
to show that the bound holds for one of the recursive formulas for $f$.

\begin{itemize}
\item Case 1: \textbf{$n_{b_1} \leq 0.47n$}.
By ($f_4$-1) and the induction hypothesis, we know
$
    f(n) \leq 2c (n_{a_1})^\alpha + 2c(n_{b_1})^\alpha + c(n_{r_1})^\alpha. 
$
Since
the trivariate function $F(n_{a_1},n_{b_1},n_{r_1})=2 (n_{a_1})^\alpha + 2(n_{b_1})^\alpha + (n_{r_1})^\alpha$ is convex, it suffices to check whether
the bound holds for the extreme points of the convex region
$\{(n_{a_1},n_{b_1},n_{r_1})\in [0,n]^3 : n_{a_1}+n_{b_1}+n_{r_1} \le n,\ n_{a_1}\le n_{b_1}\le n_{r_1},\ n_{b_1}\le 0.47n\}$.  In this case, the extreme points (excluding the origin) are
$(0, 0, n)$, $(\tfrac{1}{3}n,\tfrac{1}{3}n,\tfrac{1}{3}n)$,  and $(0, 0.47n, 0.53n)$.  
Since
\begin{align*}
2(0) + 2(0) + (n^\alpha) & \leq n^\alpha \\
2(\tfrac{1}{3}n)^\alpha + 2(\tfrac{1}{3}n)^\alpha + (\tfrac{1}{3}n)^\alpha & < 0.911n^\alpha \\	
2(0) + 2(0.47n)^\alpha + (0.53n)^\alpha & < 0.995n^\alpha, 
\end{align*}
we have $f(n)\leq cn^\alpha$ in this case.

\item Case 2: \textbf{$n_{b_1} > 0.47n$}.
 We know that  $n_{r_1} < 0.53n$ and therefore $n_{b_k} \le n_{r_{k-1}}/2 < 0.265n$. 
 By ($f_4$-2A) and ($f_4$-2B) and the induction hypothesis,
 \begin{align*}
     f(n) \leq \max\left\{ \begin{array}{l}
     c(n_{a_1})^\alpha + c(n_{b_1})^\alpha + \sum_{i=2}^{k-1} ( 2c(n_{a_i})^\alpha + 2c(n_{b_i})^\alpha) + {}\\ 2c(n_{a_k})^\alpha + 3c (n_{b_k})^\alpha + c(n_{r_k})^\alpha,\\[3pt]
     2c(n_{a_1})^\alpha + 2c(n_{b_1})^\alpha + \sum_{i=2}^{k-1} ( 4c(n_{a_i})^\alpha + 4c(n_{b_i})^\alpha) +{}\\ + 3c(n_{a_k})^\alpha + c(n_{b_k})^\alpha + c(n_{r_k})^\alpha.
     \end{array} \right.
 \end{align*}
 
 The second term in the maximum dominates the first
since $2(n_{b_k})^\alpha \leq 2(0.265n)^\alpha < (0.47n)^\alpha \leq (n_{b_1})^\alpha$ for our choice of $\alpha=1.55$.  
\remove{
 The induction would follow if we can show:
 \begin{align*}\tag{$\ast$}
     2(n_{a_1})^\alpha + 2(n_{b_1})^\alpha + \sum_{i=2}^{k-1} ( 4(n_{a_i})^\alpha + 4(n_{b_i})^\alpha) + {}\\[-2pt] \ \ \ \ \ \ \ \ \ \ \ \ \ \ 3(n_{a_k})^\alpha + (n_{b_k})^\alpha + (n_{r_k})^\alpha\ <\ n^\alpha.
 \end{align*}
}
To show that the second term is at most $cn^\alpha$, we use three intermediate claims and show
\begin{align*}
& 2(n_{a_1})^\alpha {+} 2(n_{b_1})^\alpha {+} \textstyle{\sum_{i=2}^{k-1}} ( 4(n_{a_i})^\alpha {+} 4(n_{b_i})^\alpha) {+} 3(n_{a_k})^\alpha {+} (n_{b_k})^\alpha {+} (n_{r_k})^\alpha\ \\
& \leq 2(n_{a_1})^\alpha {+} 2(n_{b_1})^\alpha {+} \textstyle{\sum_{i=2}^{k-1}} ( 4(n_{a_i})^\alpha {+} 4(n_{b_i})^\alpha) {+} 0.92(n_{r_{k-1}})^\alpha \quad\text{(by Claim \ref{claim1})} \\
& \leq 2(n_{a_1})^\alpha {+} 2(n_{b_1})^\alpha {+} 0.92(n_{r_{1}})^\alpha \quad\text{(by Claim \ref{claim2} for $j=k{-}1,\dots,2$)} \\
& \leq n^\alpha \quad\text{(by Claim \ref{claim3}).} 
\end{align*}
 
The three claims are proved as follows:
 
\begin{claim}\label{claim1}
 $3(n_{a_k})^\alpha+(n_{b_k})^\alpha +(n_{r_k})^\alpha < 0.92(n_{r_{k-1}})^\alpha$.
\label{f_k}
\end{claim}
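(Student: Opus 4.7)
\medskip

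The plan is to reduce the claim to a bounded optimization of a convex function over a convex polytope, and then check the finitely many vertices.

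Normalize by setting $m = n_{r_{k-1}}$ and $x = n_{a_k}/m$, $y = n_{b_k}/m$, $z = n_{r_k}/m$. The constraints we have available are: the ordering $0 \le x \le y \le z$ from the naming convention for children; the sum bound $x + y + z \le 1$ since $r_{k-1}$ together with its three subtrees comprise $T_{r_{k-1}}$; and crucially $z \le 0.9$, which is the defining property of the index $k$ (namely, $k$ is the smallest index with $n_{r_k} \le 0.9\, n_{r_{k-1}}$). Dividing both sides of the target inequality by $m^\alpha$, I have to show
\[
F(x,y,z) \;=\; 3 x^\alpha + y^\alpha + z^\alpha \;<\; 0.92
\]
for all $(x,y,z)$ in the polytope $\mathcal{P} = \{(x,y,z) : 0 \le x \le y \le z \le 0.9,\ x+y+z \le 1\}$.

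Since $\alpha = 1.55 > 1$, each term $t \mapsto t^\alpha$ is convex on $[0,\infty)$, so $F$ is convex. Therefore the maximum of $F$ over the compact convex polytope $\mathcal{P}$ is attained at one of its vertices, and it suffices to enumerate those vertices and evaluate $F$. By intersecting three of the defining hyperplanes at a time, the vertices of $\mathcal{P}$ are (up to the trivial origin):
\[
(0,0,0.9),\quad (0,0.5,0.5),\quad (0,0.1,0.9),\quad (\tfrac13,\tfrac13,\tfrac13),\quad (0.05,0.05,0.9).
\]
I would then compute $F$ at each: the bounds $0.9^{1.55} < 0.853$, $0.5^{1.55} < 0.342$, $0.1^{1.55} < 0.029$, $(1/3)^{1.55} < 0.183$, and $0.05^{1.55} < 0.010$ give $F \le 0.853,\ 0.683,\ 0.881,\ 0.911,\ 0.890$ respectively. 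All are strictly less than $0.92$, with the maximum occurring at the balanced point $(\tfrac13,\tfrac13,\tfrac13)$.

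The main obstacle is not really a difficulty but a bookkeeping step: correctly identifying which constraints come from where (particularly that $z \le 0.9$ is exactly the $k$-defining condition, so it is available at this induction step) and ensuring that the critical balanced vertex $(\tfrac13,\tfrac13,\tfrac13)$ gives a value strictly below $0.92$ for the chosen exponent $\alpha = 1.55$. This is what pins down the constant $0.92$ and hence justifies the choice of threshold $0.9$ in the definition of $k$; a slightly smaller threshold would tighten the bound but would weaken the companion Claim~\ref{claim2}, so the interplay of constants is the real content.
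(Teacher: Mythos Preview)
Your proof is correct and follows essentially the same approach as the paper: both normalize by $n_{r_{k-1}}$, invoke convexity of $t\mapsto t^\alpha$ to reduce to checking the vertices of the polytope $\{0\le x\le y\le z\le 0.9,\ x+y+z\le 1\}$, and then verify the inequality numerically at each vertex. Your enumeration is in fact slightly more complete---you include the vertex $(0,0,0.9)$, which the paper omits (harmlessly, since it is dominated by $(0,0.1,0.9)$)---and your closing remark about the interplay between the threshold $0.9$ and the constant $0.92$ correctly identifies why the constants are chosen as they are.
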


We can check that the claim holds by calculating the values
for the extreme points of the 
region defined by our constraints, viz.,
$\{ (n_{a_k}, n_{b_k}, n_{r_k})\in [0,n]^3:\allowbreak n_{a_k}{+}n_{b_k} {+} n_{r_k} \le n_{r_{k-1}},\
n_{a_k}\le n_{b_k}\le n_{r_k}\le 0.9n_{r_{k-1}}\}$.  These are
the points 
	$(0,  \frac{n_{r_{k-1}}}{2},  \frac{n_{r_{k-1}}}{2}),
	(0, 0.1n_{r_{k-1}}, \allowbreak 0.9n_{r_{k-1}}),
	(0.05n_{r_{k-1}}, 0.05n_{r_{k-1}}, \allowbreak 0.9n_{r_{k-1}})$,
	and 
	$(\tfrac{1}{3}n_{r_{k-1}}, \tfrac{1}{3}n_{r_{k-1}}, \tfrac{1}{3}n_{r_{k-1}})$, 
and we verify:
\begin{align*}
		 0 + \big(\tfrac{1}{2}n_{r_{k-1}}\big)^\alpha+ \big(\tfrac{1}{2}n_{r_{k-1}}\big)^\alpha < 0.685n_{r_{k-1}}^\alpha \\
		  0 + (0.1n_{r_{k-1}})^\alpha+ (0.9n_{r_{k-1}})^\alpha < 0.878n_{r_{k-1}}^\alpha\\
		  3(0.05n_{r_{k-1}})^\alpha + (0.05n_{r_{k-1}})^\alpha+ (0.9n_{r_{k-1}})^\alpha < 0.888n_{r_{k-1}}^\alpha  \\
		 3\big(\tfrac{1}{3}n_{r_{k-1}}\big)^\alpha + \big(\tfrac{1}{3}n_{r_{k-1}}\big)^\alpha+ \big(\tfrac{1}{3}n_{r_{k-1}}\big)^\alpha < 0.911n_{r_{k-1}}^\alpha.
\end{align*}


\begin{claim}\label{claim2}
$4(n_{a_{j}})^\alpha+4(n_{b_{j}})^\alpha+0.92(n_{r_{j}})^\alpha < 0.92(n_{r_{j-1}})^\alpha$ for  $2 \leq j \leq k-1$.
\label{f_0} 
\end{claim}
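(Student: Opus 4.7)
My plan is to follow the same convexity strategy used for Claim~\ref{claim1}. Normalize by $N := n_{r_{j-1}}$ and set $(x, y, z) = (n_{a_j}, n_{b_j}, n_{r_j})/N$. After dividing both sides by $N^\alpha$, the claim becomes
\[
F(x, y, z) \,:=\, 4 x^\alpha + 4 y^\alpha + 0.92\, z^\alpha \,<\, 0.92
\]
on the convex region
\[
R = \{(x, y, z) : 0 \le x \le y \le z,\ x + y + z \le 1,\ z \ge 0.9\}.
\]
The lower bound $z \ge 0.9$ comes from the minimality of $k$, which guarantees $n_{r_i} > 0.9\, n_{r_{i-1}}$ for all $i \le k-1$.

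Since $F$ is convex, its maximum on $R$ is attained at a vertex. I would enumerate the vertices by choosing three active constraints at a time, obtaining exactly four candidates: $(0, 0, 0.9)$, $(0, 0.1, 0.9)$, $(0.05, 0.05, 0.9)$, and $(0, 0, 1)$. Plugging in $\alpha = 1.55$ gives $F$-values of approximately $0.781$, $0.894$, $0.858$, and $0.92$ respectively, so three of the four vertices already yield $F < 0.92$ with substantial slack, parallel to the numerical checks in Claim~\ref{claim1}.

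The one genuine obstacle is the fourth vertex $(0, 0, 1)$, at which $F$ equals $0.92$ exactly rather than strictly less. I would resolve this by a simple counting observation: since $r_{j-1}$ is a node of $T_{r_{j-1}}$ but of none of its child subtrees $T_{a_j}$, $T_{b_j}$, $T_{r_j}$, we have $n_{a_j} + n_{b_j} + n_{r_j} \le n_{r_{j-1}} - 1$, forcing $x + y + z < 1$ (and in particular $z < 1$). This excludes $(0, 0, 1)$ from the true feasible region. A brief derivative calculation at each of the three edges emanating from $(0, 0, 1)$ shows that $F$ is strictly decreasing as one moves away from that vertex, so convexity yields $F < 0.92$ uniformly on $R$. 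A slicker alternative, if preferred, is to bound $F(x, y, z) \le 4(1-z)^\alpha + 0.92\, z^\alpha =: h(z)$ using superadditivity $x^\alpha + y^\alpha \le (x+y)^\alpha$ (valid for $\alpha \ge 1$) together with $x + y \le 1 - z$, and then observe that $h$ attains its maximum on $[0.9, 1]$ only at $z = 1$, so $h(z) < 0.92$ throughout $[0.9, 1)$.
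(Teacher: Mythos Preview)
Your argument follows the same convexity-plus-extreme-points template as the paper. In fact your version is the more careful one: the paper's list of extreme points is copied verbatim from Claim~\ref{claim1} and does not match the region, and at the point corresponding to your $(0,0,1)$ the paper simply writes ``$0+0+0.92(n_{r_{j-1}})^\alpha < 0.92(n_{r_{j-1}})^\alpha$'' without comment, whereas you correctly isolate this boundary case and dispatch it via $n_{a_j}+n_{b_j}+n_{r_j}\le n_{r_{j-1}}-1$ (your superadditivity bound $F\le h(z)$ is the tidiest way to finish).
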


We can check that the claim holds by calculating the values
for the extreme points of the region
$\{ (n_{a_j}, n_{b_j}, n_{r_j})\in [0,n]^3: n_{a_j}{+}n_{b_j} {+} n_{r_j} \le n_{r_{j-1}},\
n_{a_k}\le n_{b_k}\le n_{r_k}\le 0.9n_{r_{j-1}}\}$, specifically, the points
	$(0,  \tfrac{n_{r_{j-1}}}{2},  \tfrac{n_{r_{j-1}}}{2})$,
	$(0, 0.1n_{r_{j-1}}, \allowbreak 0.9n_{r_{j-1}})$,
	$(0.05n_{r_{j-1}}, 0.05n_{r_{j-1}}, 0.9n_{r_{j-1}})$,
	and 
	$(\tfrac{n_{r_{j-1}}}{3}, \tfrac{n_{r_{j-1}}}{3}, \tfrac{n_{r_{j-1}}}{3})$:
\begin{align*}
    0 + 0 + 0.92(n_{r_{k-2}})^\alpha &< 0.92(n_{r_{k-2}})^\alpha \\
    0 + 4(0.1n_{r_{k-2}})^\alpha+ 0.92(0.9n_{r_{k-2}})^\alpha &< 0.895(n_{r_{k-2}})^\alpha\\
	 4(0.05n_{r_{k-2}})^\alpha + 4(0.05n_{r_{k-2}})^\alpha+ 0.92(0.9n_{r_{k-2}})^\alpha &< 0.859(n_{r_{k-2}})^\alpha.
\end{align*}



\begin{claim}\label{claim3}
$2(n_{a_1})^\alpha+2(n_{b_1})^\alpha+0.92(n_{r_1})^\alpha \leq n^\alpha$.
\end{claim}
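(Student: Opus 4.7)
The plan is to follow the convex-maximization strategy used for Claims~\ref{claim1} and~\ref{claim2}. The left-hand side is $F(n_{a_1}, n_{b_1}, n_{r_1}) = 2(n_{a_1})^\alpha + 2(n_{b_1})^\alpha + 0.92(n_{r_1})^\alpha$, which is convex on $[0,n]^3$ since each summand is a nonnegative multiple of $t \mapsto t^\alpha$ with $\alpha = 1.55 > 1$. Therefore the maximum of $F$ over any compact polytope is attained at an extreme point, so it suffices to check the inequality at those vertices.

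Next I would identify the polytope of admissible triples. The defining constraints are $0 \le n_{a_1}$, $n_{a_1} \le n_{b_1}$, $n_{b_1} \le n_{r_1}$, $n_{a_1} + n_{b_1} + n_{r_1} \le n$, and (taking the closure of the open Case~2 hypothesis $n_{b_1} > 0.47n$ for continuity reasons) $n_{b_1} \ge 0.47n$. With three variables and five constraints, each vertex corresponds to three simultaneously tight constraints. Enumerating the ten combinations and discarding infeasible ones---for example $n_{a_1}=n_{b_1}=n_{r_1}$ together with $n_{a_1}+n_{b_1}+n_{r_1}=n$ forces $n_{b_1}=n/3 < 0.47n$---leaves exactly four vertices:
\begin{align*}
(0,\, 0.47n,\, 0.47n),\quad (0,\, 0.47n,\, 0.53n),\quad (0,\, n/2,\, n/2),\quad (0.06n,\, 0.47n,\, 0.47n).
\end{align*}

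The last step is a direct numerical evaluation at each of these four vertices with $\alpha = 1.55$. I would verify that the first, second, and fourth yield values of roughly $0.91\,n^\alpha$, $0.96\,n^\alpha$, and $0.93\,n^\alpha$ respectively, all comfortably below $n^\alpha$. The only truly tight case---and the sole real obstacle---is the vertex $(0, n/2, n/2)$, where $F = 2.92 \cdot 2^{-1.55}\, n^\alpha \approx 0.997\, n^\alpha$. Passing this vertex amounts to the scalar inequality $2^\alpha > 2.92$, i.e.\ $\alpha > \log_2 2.92 \approx 1.546$, which is exactly what the choice $\alpha = 1.55$ just barely affords. The thinness of the margin here (about $0.003$) is precisely what pins down both the exponent in Theorem~\ref{thm:ternary} and the coefficient $0.92$ appearing in Claim~\ref{claim1}; any attempt to reduce either would require a separate improvement in the other.
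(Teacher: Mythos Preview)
Your proof is correct and follows the same convexity-plus-extreme-points approach as the paper. In fact your enumeration of the four vertices $(0,0.47n,0.47n)$, $(0,0.47n,0.53n)$, $(0,n/2,n/2)$, $(0.06n,0.47n,0.47n)$ is more accurate than the paper's own list, which contains a typo ($0.08n$ for $0.06n$), omits two of the genuine vertices, and includes the point $(0,0,n)$ that lies outside the Case~2 region; your identification of $(0,n/2,n/2)$ as the binding vertex matches the paper.
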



We can check that the claim holds by calculating the values
for the extreme points of the region
$\{ (n_{a_1}, n_{b_1}, n_{r_1})\in [0,n]^3: n_{a_1}+n_{b_1} + n_{r_1} \le n,\
n_{a_1}\le n_{b_1}\le n_{r_1}, n_{b_1}>0.47n\}$, specifically, the points
	$(0.08n,  0.47n, 0.47n)$,
	$(0,  \tfrac{1}{2}n,  \tfrac{1}{2}n)$ and
	$(0,  0, n)$:
\begin{align*}
    	  2(0.08n)^\alpha + 2(0.47n)^\alpha + 0.92(0.47n)^\alpha &< 0.946n^\alpha \\ 
    	  0 + 2(\tfrac{1}{2}n)^\alpha + 0.92(\tfrac{1}{2}n)^\alpha &< 0.998n^\alpha \\ 
  0 + 0 + 0.92(n)^\alpha &< 0.92n^\alpha. 
\end{align*}
\end{itemize}
This finishes the proof of Theorem~\ref{thm:ternary}.
\end{proof}

\fi
\end{document}